\newif\ifinformsclass
\IfFileExists{informs3.cls}{%
  \informsclasstrue
  \documentclass[moor,nonblindrev]{informs3}
}{%
  \informsclassfalse
  \documentclass[11pt]{article}
}

\ifinformsclass\else
\usepackage[margin=1in]{geometry}
\usepackage[T1]{fontenc}
\usepackage[utf8]{inputenc}
\usepackage{lmodern}
\fi
\usepackage{amsmath,amssymb,amsthm}
\usepackage{mathtools}
\usepackage{graphicx}
\usepackage{booktabs}
\usepackage{enumitem}
\usepackage{flafter}
\ifinformsclass\else
\usepackage[numbers,sort&compress]{natbib}
\fi
\usepackage[colorlinks=true,linkcolor=blue,citecolor=blue,urlcolor=blue]{hyperref}
\usepackage[nameinlink,noabbrev]{cleveref}
\graphicspath{{data/figures/}}

\crefname{section}{Section}{Sections}
\Crefname{section}{Section}{Sections}
\crefname{subsection}{Section}{Sections}
\Crefname{subsection}{Section}{Sections}
\crefname{theorem}{Theorem}{Theorems}
\Crefname{theorem}{Theorem}{Theorems}
\crefname{proposition}{Proposition}{Propositions}
\Crefname{proposition}{Proposition}{Propositions}
\crefname{corollary}{Corollary}{Corollaries}
\Crefname{corollary}{Corollary}{Corollaries}
\crefname{lemma}{Lemma}{Lemmas}
\Crefname{lemma}{Lemma}{Lemmas}
\crefname{remark}{Remark}{Remarks}
\Crefname{remark}{Remark}{Remarks}
\crefname{equation}{Equation}{Equations}
\Crefname{equation}{Equation}{Equations}
\crefname{table}{Table}{Tables}
\Crefname{table}{Table}{Tables}
\crefname{figure}{Figure}{Figures}
\Crefname{figure}{Figure}{Figures}

\renewcommand{\topfraction}{0.9}
\renewcommand{\bottomfraction}{0.7}

\renewcommand{\floatpagefraction}{0.85}

\newtheorem{theorem}{Theorem}[section]
\newtheorem{proposition}[theorem]{Proposition}

\newtheorem{corollary}[theorem]{Corollary}
\theoremstyle{remark}
\newtheorem{remark}[theorem]{Remark}

\newcommand{\proofqed}{\unskip\nobreak\hspace{0.35em}\ensuremath{\square}}

\title{Reflected UAS: Corrected Deterministic Stability and Direct CTMC Drift Calculation}
\author{Krishna Subedi\\\texttt{krishna.subedi@neryva.com}}
\date{}

\begin{document}

\maketitle

\begin{abstract}
We analyze Reflected UAS routing for heterogeneous multi-server queues at
fixed parameters under subcritical load. The deterministic surrogate is a
reflected ODE on the nonnegative orthant, not the unconstrained drift
equation. This reflected ODE has a unique boundary equilibrium characterized
by a scalar consistency equation and a convex-potential representation; all
trajectories converge to it. The older argument lifting deterministic Lyapunov
descent to CTMC stability fails: the exact generator applied to the
deterministic potential produces a boundary term absent from the reflected-ODE
descent identity. We give a direct Foster--Lyapunov drift inequality for the
CTMC using a weighted-quadratic function, bypassing the failed lift. At the
benchmark parameter point, the boundary equilibrium matches the numerical
attractor to machine precision, and the default Reflected UAS policy has lower
mean queue length than UAS and JSSQ across independent seed blocks.
\end{abstract}

\section{Introduction}

Heterogeneous multi-server queues with state-dependent routing are often
studied through deterministic surrogates: one writes down an ODE that captures
the mean drift, proves convergence, and lifts the result to the Markov chain by
a fluid-limit or Lyapunov argument
\citep{Dai1995,DaiMeyn1995,Schoenlein2015}. Reflected UAS, whose softmax
routing map is smooth and positive on the entire orthant, seems well suited to
this program. But the orthant boundary, where some queues are empty and service
completions are absent, introduces two gaps that the smoothness of the routing
map does not close.

The first gap is deterministic. Subtracting service rates from arrival rates
coordinatewise gives an unconstrained drift equation that can send trajectories
outside the nonnegative orthant at boundary states. The actual deterministic
surrogate is a reflected ODE on the nonnegative orthant, with boundary
regulation that clips outward drift to zero
\citep{HarrisonReiman1981,HeemelsSchumacherWeiland2000}. We show that under
subcritical load this reflected ODE has a unique boundary equilibrium given by a
scalar consistency equation, admits a convex-potential representation, and
attracts all trajectories.

The second gap is stochastic. A previous argument applied the deterministic
potential directly to the CTMC generator to conclude stability. That step is
invalid: the generator applied to this potential produces a nonnegative boundary
contribution that the reflected-ODE descent identity does not contain. We
identify the exact boundary term and demonstrate it on a finite-state audit.

Having diagnosed both gaps, we give a direct Foster--Lyapunov drift inequality
for the fixed-parameter CTMC. The Lyapunov function is a weighted quadratic
(not the deterministic potential), and the key bound exploits the softmax
structure to control the routing contribution through the minimum scaled queue
length. The drift constants are checked exhaustively on toy grids and against
sampled benchmark states. At the benchmark parameter point, the default
Reflected UAS policy has lower mean queue length than UAS and JSSQ across
independent seed blocks.

\Cref{sec:model} defines the model; \Cref{sec:detcore}--\Cref{sec:detval}
develop and validate the reflected-ODE theory; \Cref{sec:obstruction}
identifies the boundary obstruction;
\Cref{sec:directctmc}--\Cref{sec:benchmark} give the direct CTMC calculation
and benchmark comparison.

\section{Model and Main Questions}
\label{sec:model}

The system is a heterogeneous parallel-server queue. Jobs arrive as a Poisson
process at rate \(\lambda\) and are routed to one of \(N\) servers by a
state-dependent rule defined below. The queue-length vector is
\(Q(t)=(Q_1(t),\dots,Q_N(t))\in \mathbb{Z}_+^N\); server \(i\) works at rate
\(\mu_i>0\) and processes jobs one at a time when \(Q_i(t)>0\)
\citep{Dai1995,ChenZhang1997}.

\subsection{Reflected UAS Routing Law}
\label{subsec:routing-law}

Reflected UAS routes each arrival according to a smooth softmax rule. For a
state \(q\in \mathbb{R}_+^N\), define
\[
w_i(q)
:=
\mu_i^\gamma
\exp\!\left(-\alpha (q_i+c)/\mu_i^\beta\right),
\qquad
W(q):=\sum_{j=1}^N w_j(q),
\]
and
\begin{equation}
\label{eq:routing-probability}
p_i(q):=\frac{w_i(q)}{W(q)},
\qquad i=1,\dots,N.
\end{equation}
Here \(\alpha>0\) is an inverse-temperature parameter, \(\beta>0\) controls the
service-rate scaling inside the exponential, \(\gamma\in\mathbb{R}\) is a
service-rate prefactor exponent, and \(c\ge 0\) is a queue offset. The routing
map \(p(\cdot)\) is smooth and positive on the orthant.

For the CTMC, an arrival joins queue \(i\) with probability \(p_i(Q(t))\). The
generator acting on a test function \(f\) is
\begin{equation}
\label{eq:ctmc-generator}
(\mathcal{L}f)(Q)
=
\lambda\sum_{i=1}^N p_i(Q)\bigl[f(Q+e_i)-f(Q)\bigr]
+
\sum_{i=1}^N \mu_i \mathbf{1}_{\{Q_i>0\}}\bigl[f(Q-e_i)-f(Q)\bigr].
\end{equation}

\subsection{Deterministic Surrogate}
\label{subsec:deterministic-surrogate}

The unconstrained drift associated with the routing law is
\[
F(q):=\lambda p(q)-\mu,
\qquad
\mu:=(\mu_1,\dots,\mu_N).
\]
Setting \(\dot q = F(q)\) ignores nonnegativity: this drift can point outside
the orthant when some queue is empty. The deterministic model is the reflected
ODE on \(\mathbb{R}_+^N\), defined in \Cref{sec:detcore}
\citep{HarrisonReiman1981,HeemelsSchumacherWeiland2000,NagurneyZhang1996}.

Throughout the paper we keep two questions separate: the behavior of the
reflected ODE as a deterministic system, and whether this ODE arises as the
fluid limit of the CTMC.

\subsection{Anchor Benchmark}

One parameter point serves as the empirical and computational anchor
throughout:
\[
(\alpha,\beta,\gamma,c)=(20,0.85,0.5,0.5),
\]
\[
\mu=(0.5,0.7,0.9,1.1,1.3,1.5,1.7,1.9,2.1,2.3),
\qquad
\lambda=11.2.
\]
Since \(\Lambda:=\sum_{i=1}^N \mu_i = 14.0 > \lambda\), the system is
subcritical. This point appears in the deterministic validation
(\Cref{sec:detval}), the CTMC drift checks (\Cref{sec:directctmc}), and the
policy comparison against UAS and JSSQ
\citep{Houck1987,SelenAdanKapodistriaVanLeeuwaarden2016}.

\section{Deterministic Reflected-ODE Theory}
\label{sec:detcore}

\subsection{Reflected Dynamics}

Let \(C:=\mathbb{R}_+^N\). Recall the unconstrained drift
\(F(q):=\lambda p(q)-\mu\) from Section~\ref{subsec:deterministic-surrogate}.
The reflected dynamics is
\[
q_i(t)
=
q_i(0)+\int_0^t F_i(q(s))\,ds + y_i(t),
\qquad i=1,\dots,N,
\]
where each regulator \(y_i\) is nondecreasing, \(y_i(0)=0\), and satisfies the
complementarity condition
\[
\int_0^\infty q_i(s)\,dy_i(s)=0.
\]
In coordinates,
\begin{equation}
\label{eq:reflected-coordinate-dynamics}
\dot q_i=
\Gamma_i(q):=
\begin{cases}
F_i(q), & q_i>0,\\
\max\{F_i(q),0\}, & q_i=0.
\end{cases}.
\end{equation}
On the interior this is \(\dot q = \lambda p(q)-\mu\); at the boundary,
reflection clips outward drift to zero
\citep{HarrisonReiman1981,HeemelsSchumacherWeiland2000,NagurneyZhang1996}.

\begin{proposition}
\label{prop:no-interior-equilibrium}
Assume \(\lambda<\Lambda:=\sum_{i=1}^N \mu_i\). Then the reflected ODE has no
equilibrium \(q^*\in(0,\infty)^N\) with all coordinates strictly positive.
\end{proposition}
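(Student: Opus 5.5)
Suppose, for contradiction, that \(q^*\in(0,\infty)^N\) is an equilibrium of the reflected ODE. The plan is to sum the equilibrium conditions over all coordinates and use that the routing probabilities sum to one. This produces a mass-balance identity that contradicts subcriticality.

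First, we use that every coordinate of \(q^*\) is strictly positive. By \eqref{eq:reflected-coordinate-dynamics}, the reflected vector field then agrees with the unconstrained drift in every coordinate, \(\Gamma_i(q^*)=F_i(q^*)\). Equivalently, near an interior point the regulators \(y_i\) cannot increase, by the complementarity condition \(\int q_i\,dy_i=0\). The equilibrium condition \(\Gamma(q^*)=0\) therefore reads \(\lambda p_i(q^*)=\mu_i\) for each \(i=1,\dots,N\).

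Second, we sum these \(N\) identities. Since \(p_i=w_i/W\) with all \(w_i>0\), the routing probabilities in \eqref{eq:routing-probability} satisfy \(\sum_i p_i(q^*)=1\). Summing gives
\[
\lambda=\lambda\sum_{i=1}^N p_i(q^*)=\sum_{i=1}^N \mu_i=\Lambda,
\]
which contradicts the hypothesis \(\lambda<\Lambda\). Hence no equilibrium with all coordinates strictly positive exists.

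No step here is a real obstacle. The only point needing care is justifying that the reflection term vanishes at an interior equilibrium. This follows directly from the case split in \eqref{eq:reflected-coordinate-dynamics}: the clipping \(\max\{F_i,0\}\) is applied only when \(q_i=0\). The argument does not depend on the specific softmax form beyond \(\sum_i p_i\equiv 1\). It also shows slightly more: \(\sum_i F_i(q)=\lambda-\Lambda<0\) everywhere, so total mass strictly decreases along interior trajectories.
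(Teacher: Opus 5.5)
Your proposal is correct and follows essentially the same argument as the paper. At an interior equilibrium the reflection is inactive, so \(\lambda p_i(q^*)=\mu_i\) for every \(i\), and summing with \(\sum_i p_i(q^*)=1\) forces \(\lambda=\Lambda\), which contradicts subcriticality.
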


\begin{proof}
If \(q^*\in(0,\infty)^N\), then every regulator is locally constant at
equilibrium, so
\[
0=\lambda p_i(q^*)-\mu_i,
\qquad i=1,\dots,N.
\]
Summing over \(i\) gives
\[
0=\lambda\sum_{i=1}^N p_i(q^*)-\sum_{i=1}^N \mu_i
=\lambda-\Lambda,
\]
which contradicts \(\lambda<\Lambda\).\proofqed
\end{proof}

\begin{corollary}
\label{cor:boundary-equilibrium}
Under \(\lambda<\Lambda\), every equilibrium of the reflected ODE lies on a
boundary face of \(C\).
\end{corollary}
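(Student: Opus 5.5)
This is essentially immediate from \Cref{prop:no-interior-equilibrium}, and the plan is a short argument by complementation within \(C\). Let \(q^*\in C\) be an equilibrium of the reflected ODE, meaning \(\Gamma(q^*)=0\) in \eqref{eq:reflected-coordinate-dynamics}; equivalently, the constant path \(q(t)\equiv q^*\) solves the reflected dynamics with some nondecreasing regulators satisfying complementarity. Since \(C=\mathbb{R}_+^N\) is the disjoint union of the open orthant \((0,\infty)^N\) and its boundary \(\partial C=\{q\in C:\ q_i=0 \text{ for some } i\}\), it suffices to rule out \(q^*\in(0,\infty)^N\).

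That exclusion is exactly the content of \Cref{prop:no-interior-equilibrium} under the standing assumption \(\lambda<\Lambda\). One point needs checking: the equilibrium notion there must match the one used here. If equilibria are defined as zeros of \(\Gamma\), then at an interior point \(\Gamma_i=F_i\) for every \(i\), so \(\Gamma(q^*)=0\) forces \(\lambda p_i(q^*)=\mu_i\) for all \(i\). Summing and using \(\sum_i p_i\equiv 1\) gives \(\lambda=\Lambda\), a contradiction. If equilibria are instead defined through the integral formulation with regulators, the proposition's remark applies: complementarity makes \(y_i\) constant wherever \(q_i>0\), so the two notions agree in the interior.

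Hence \(q^*\) has at least one zero coordinate. It therefore lies in some face \(\{q\in C: q_i=0,\ i\in I\}\) with \(I\neq\emptyset\), which is what "boundary face" means in the statement. No step is a real obstacle. The only care needed is this consistency of the equilibrium definition, and noting that the corollary asserts location only, not existence; existence and uniqueness of the boundary equilibrium are deferred to the scalar consistency equation developed later.
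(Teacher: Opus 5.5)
Your proposal is correct and matches the paper's (implicit) argument: the corollary follows immediately from \Cref{prop:no-interior-equilibrium}, because $C$ is the disjoint union of $(0,\infty)^N$ and the set of points with at least one zero coordinate. Your check that the two equilibrium notions agree in the interior just reproduces that proposition's summation argument ($\lambda\sum_i p_i(q^*)=\lambda=\Lambda$), so nothing further is needed.
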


\subsection{Exact Boundary Equilibrium}

For each server \(i\), define the threshold constant
\[
\theta_i:=\mu_i^{\gamma-1}\exp\!\left(-\alpha c/\mu_i^\beta\right).
\]
At equilibrium, reflection imposes
\[
q_i^*\ge 0,
\qquad
\lambda p_i(q^*)\le \mu_i,
\qquad
q_i^*(\mu_i-\lambda p_i(q^*))=0.
\]
Let
\[
w_i^*:=\mu_i^\gamma \exp\!\left(-\alpha (q_i^*+c)/\mu_i^\beta\right),
\qquad
W^*:=\sum_{j=1}^N w_j^*,
\qquad
K:=\frac{W^*}{\lambda}.
\]
Then \(p_i(q^*)=w_i^*/W^*\), so
\[
q_i^*>0 \Rightarrow w_i^*=\mu_i K,
\qquad
q_i^*=0 \Rightarrow w_i^*\le \mu_i K.
\]
Hence
\[
q_i^*=
\max\!\left\{
0,\frac{\mu_i^\beta}{\alpha}\log\!\left(\frac{\theta_i}{K}\right)
\right\}.
\]
Substituting back into \(W^*=\lambda K\) gives the scalar consistency equation
\begin{equation}
\label{eq:scalar-consistency}
\lambda = G(K):=\sum_{i=1}^N \min\!\left\{\mu_i,\frac{\mu_i\theta_i}{K}\right\}.
\end{equation}

\begin{theorem}
\label{thm:exact-boundary-equilibrium}
Assume \(\lambda<\Lambda\). Then the equation \(\lambda=G(K)\) has a unique
solution \(K^*>0\). The reflected ODE has a unique equilibrium \(q^*\), and it
is given coordinatewise by
\[
q_i^*=
\max\!\left\{
0,\frac{\mu_i^\beta}{\alpha}\log\!\left(\frac{\theta_i}{K^*}\right)
\right\},
\qquad i=1,\dots,N.
\]
Moreover,
\[
q_i^*>0 \Longleftrightarrow K^*<\theta_i.
\]
\end{theorem}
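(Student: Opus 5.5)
The plan has two parts: first show that \(\lambda=G(K)\) has exactly one root, then show that equilibria of the reflected ODE correspond bijectively to roots of \(G\) via the explicit formula.

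\textbf{Existence and uniqueness of \(K^*\).} Each summand \(g_i(K):=\min\{\mu_i,\mu_i\theta_i/K\}\) is continuous on \((0,\infty)\). It equals \(\mu_i\) for \(K\le\theta_i\) and is strictly decreasing for \(K>\theta_i\), so \(g_i\) is nonincreasing, and \(G=\sum_i g_i\) is continuous and nonincreasing. For \(K\le\min_i\theta_i\) we have \(G(K)=\Lambda>\lambda\). As \(K\to\infty\), \(G(K)=\sum_i\mu_i\theta_i/K\to0<\lambda\). The intermediate value theorem therefore gives a root.

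For uniqueness, \(\lambda<\Lambda\) and \(G\equiv\Lambda\) on \((0,\min_i\theta_i]\) force every root to satisfy \(K>\min_i\theta_i\). On \((\min_i\theta_i,\infty)\) at least one summand is strictly decreasing, and all summands are nonincreasing, so \(G\) is strictly decreasing there. Hence \(K^*\) is unique. This restriction to the region where \(G\) is strictly monotone is the only delicate step, and it is exactly where subcriticality is used.

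\textbf{Equilibria \(\Leftrightarrow\) roots.} An equilibrium satisfies \(\Gamma(q^*)=0\), which is equivalent to the complementarity system displayed before \eqref{eq:scalar-consistency}: if \(q_i>0\) we need \(F_i=0\), and if \(q_i=0\) we need \(F_i\le0\). Given an equilibrium, the derivation preceding the theorem shows that \(K=W^*/\lambda\) satisfies the displayed formula for \(q_i^*\). Substituting, \(w_i^*=\mu_i\min\{\theta_i,K\}\), so
\[
W^*=\sum_i\mu_i\min\{\theta_i,K\}=K\,G(K).
\]
Combined with \(W^*=\lambda K\), this gives \(G(K)=\lambda\), hence \(K=K^*\) and \(q^*\) is the stated vector. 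This proves uniqueness of the equilibrium.

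Conversely, define \(q^*\) from \(K^*\) by the formula. Then \(w_i^*=\mu_i\min\{\theta_i,K^*\}\), and \(W^*=K^*G(K^*)=\lambda K^*\). It follows that
\[
\lambda p_i(q^*)=\mu_i\min\{\theta_i,K^*\}/K^*\le\mu_i,
\]
with equality whenever \(K^*<\theta_i\), which is precisely when \(q_i^*>0\). So the complementarity conditions hold and \(\Gamma(q^*)=0\), proving existence of the equilibrium.

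\textbf{The equivalence.} By the formula, \(q_i^*>0\) holds iff \(\log(\theta_i/K^*)>0\), i.e.\ iff \(K^*<\theta_i\).
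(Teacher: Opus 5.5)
Your proposal is correct and follows the paper's proof essentially step for step. It uses the same monotonicity and intermediate-value argument for existence, the same observation that $G\equiv\Lambda$ on $(0,\min_i\theta_i]$ forces every root into the region where $G$ is strictly decreasing, and the complementarity relations for the coordinate formula and active set. The one difference is that you spell out both directions of the equilibrium--root correspondence via $w_i^*=\mu_i\min\{\theta_i,K\}$ and $W^*=K\,G(K)$, which the paper compresses into one sentence. As a result, your version also explicitly checks that the formula built from $K^*$ is a genuine equilibrium.
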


\begin{proof}
Each term in \(G\) is continuous and nonincreasing on \((0,\infty)\), so \(G\)
is continuous and nonincreasing. Also,
\[
\lim_{K\downarrow 0} G(K)=\Lambda>\lambda,
\qquad
\lim_{K\uparrow\infty} G(K)=0<\lambda.
\]
Hence at least one solution exists. Let
\(\theta_{\min}:=\min_{1\le i\le N}\theta_i\). For \(0<K\le \theta_{\min}\),
every term in \(G\) equals \(\mu_i\), so \(G(K)=\Lambda\). Therefore any
solution must satisfy \(K>\theta_{\min}\). On \((\theta_{\min},\infty)\), at
least one term is strictly decreasing and every other term is nonincreasing, so
\(G\) is strictly decreasing there. The solution is therefore unique. The
coordinate formula and active-set characterization follow from the
complementarity relations.\proofqed
\end{proof}

\subsection{Convex-Gradient Structure}

Define the diagonal matrix
\[
D:=\operatorname{diag}(\mu_1^\beta,\dots,\mu_N^\beta),
\]
and the potential
\begin{equation}
\label{eq:deterministic-potential}
H(q):=
\sum_{i=1}^N \mu_i^{1-\beta} q_i
+
\frac{\lambda}{\alpha}\log\!\left(
\sum_{j=1}^N \mu_j^\gamma
\exp\!\left(-\alpha (q_j+c)/\mu_j^\beta\right)
\right).
\end{equation}

\begin{proposition}
\label{prop:gradient-identity}
For every \(q\in C\),
\[
\partial_i H(q)=\frac{\mu_i-\lambda p_i(q)}{\mu_i^\beta},
\qquad i=1,\dots,N.
\]
Equivalently,
\[
\lambda p(q)-\mu = -D\nabla H(q).
\]
\end{proposition}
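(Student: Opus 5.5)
This is a direct computation: differentiate the two terms of \(H\) in \eqref{eq:deterministic-potential} separately and then recognize the routing probability \eqref{eq:routing-probability}.

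The linear term \(\sum_j \mu_j^{1-\beta} q_j\) contributes \(\mu_i^{1-\beta}\) to \(\partial_i H\). For the log-sum-exp term, write its argument as \(W(q)=\sum_j w_j(q)\), with \(w_j\) as in \Cref{subsec:routing-law}. Only \(w_i\) depends on \(q_i\), and
\[
\partial_i w_i(q) = -\frac{\alpha}{\mu_i^\beta}\,w_i(q).
\]
The chain rule then gives
\[
\partial_i\Bigl(\frac{\lambda}{\alpha}\log W(q)\Bigr)
=\frac{\lambda}{\alpha}\cdot\frac{1}{W(q)}\cdot\Bigl(-\frac{\alpha}{\mu_i^\beta}w_i(q)\Bigr)
=-\frac{\lambda p_i(q)}{\mu_i^\beta},
\]
using \(p_i=w_i/W\).

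Adding the two contributions,
\[
\partial_i H(q)=\mu_i^{1-\beta}-\frac{\lambda p_i(q)}{\mu_i^\beta}=\frac{\mu_i-\lambda p_i(q)}{\mu_i^\beta}.
\]
Multiplying the \(i\)th component by \(\mu_i^\beta\), i.e.\ applying \(D\), gives \(D\nabla H(q)=\mu-\lambda p(q)\). This is the equivalent form \(\lambda p(q)-\mu=-D\nabla H(q)\).

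The only point needing care is at boundary points of \(C\), where some \(q_i=0\). Here I will note that \(H\) is defined and smooth on all of \(\mathbb{R}^N\), because \(W>0\) everywhere. So \(\nabla H\) on \(C\) is the ordinary gradient of this extension, and no one-sided derivatives are needed. There is no real obstacle.
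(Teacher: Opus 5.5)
Your proposal is correct and follows essentially the same route as the paper: differentiate the log-normalizing term to get \(\partial_i \log W(q) = -(\alpha/\mu_i^\beta)\,p_i(q)\), add the derivative \(\mu_i^{1-\beta}\) of the linear term, and apply \(D\). Your remark that \(H\) extends smoothly to all of \(\mathbb{R}^N\) (since \(W>0\) everywhere) is a small addition that removes any concern about derivatives at boundary points of \(C\).
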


\begin{proof}
Differentiate the log-normalizing term:
\[
\partial_i \log W(q)
=
-\frac{\alpha}{\mu_i^\beta}p_i(q).
\]
Differentiate the linear term of \(H\) and combine the two contributions.\proofqed
\end{proof}

\begin{proposition}
\label{prop:convexity-coercivity}
If \(\lambda<\Lambda\), then \(H\) is convex and coercive on \(C\).
\end{proposition}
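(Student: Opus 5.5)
Convexity and coercivity are handled separately. Write \(H(q)=\ell(q)+\frac{\lambda}{\alpha}\Phi(q)\), where \(\ell(q):=\sum_i \mu_i^{1-\beta}q_i\) is linear and
\[
\Phi(q):=\log\sum_{j=1}^N \exp\!\bigl(a_j - \alpha q_j/\mu_j^\beta\bigr),
\qquad a_j:=\gamma\log\mu_j-\alpha c/\mu_j^\beta .
\]
\(\Phi\) is the log-sum-exp function composed with an affine map of \(q\). Log-sum-exp is convex, since its Hessian is \(\operatorname{diag}(p)-pp^\top\succeq 0\), which is a covariance matrix. Composition with an affine map preserves convexity, and \(\lambda/\alpha>0\). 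So \(H\) is convex on all of \(\mathbb{R}^N\), hence on \(C\). Alternatively one can note that the Hessian is \(\frac{\lambda\alpha}{1}\,E(\operatorname{diag}(p)-pp^\top)E\) with \(E=\operatorname{diag}(\mu_i^{-\beta})\), using \Cref{prop:gradient-identity}. Convexity does not need \(\lambda<\Lambda\).

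Coercivity is where subcriticality is used. The goal is a lower bound of the form \(H(q)\ge \varepsilon\|q\|_1 - M\) on \(C\). For each \(j\), the log-sum-exp is at least its \(j\)-th term, so
\[
\Phi(q)\ge a_j-\alpha q_j/\mu_j^\beta .
\]
A single \(j\) is not enough, because the linear term only partly cancels. Instead I take a convex combination with weights \(\pi_j:=\mu_j/\Lambda\) and use \(\Phi(q)\ge \sum_j \pi_j(a_j-\alpha q_j/\mu_j^\beta)\). Then
\[
H(q)\ge \sum_i \mu_i^{1-\beta}q_i - \frac{\lambda}{\Lambda}\sum_i \mu_i^{1-\beta} q_i + \text{const}
=\Bigl(1-\frac{\lambda}{\Lambda}\Bigr)\sum_i \mu_i^{1-\beta}q_i+\text{const}.
\]
Since \(q\ge 0\) and \(\mu_i>0\), the right-hand side is at least \((1-\lambda/\Lambda)\min_i\mu_i^{1-\beta}\,\|q\|_1 - M\). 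This tends to \(+\infty\) as \(\|q\|\to\infty\) in \(C\), which gives coercivity. The proof will also remark that the bound uses \(q\ge0\) essentially: off the orthant, \(H\) need not be coercive.

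The main obstacle is choosing these weights. Choosing \(\pi_j\propto\mu_j\) is exactly what makes \(\pi_j\alpha/\mu_j^\beta\cdot\frac{\lambda}{\alpha}\) proportional to \(\mu_j^{1-\beta}\), so the cancellation is uniform across coordinates and the slack is \(1-\lambda/\Lambda>0\). Once this is observed, the rest is routine.
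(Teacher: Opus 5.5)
Your proof is correct and follows the paper's argument. Convexity comes from the Hessian being a \(p(q)\)-weighted variance after the diagonal rescaling by \(\mu_i^{-\beta}\). Coercivity comes from the bound \(\log\sum_i e^{x_i}\ge\sum_i\rho_i x_i\) with the same weights \(\rho_i=\mu_i/\Lambda\), which gives \(H(q)\ge(1-\lambda/\Lambda)\sum_i\mu_i^{1-\beta}q_i+\mathrm{const}\). Your explicit \(\ell^1\) lower bound with constant \((1-\lambda/\Lambda)\min_i\mu_i^{1-\beta}\), and your remarks that convexity does not need \(\lambda<\Lambda\) and coercivity needs \(q\ge 0\), only sharpen the paper's final step.
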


\begin{proof}
The Hessian satisfies
\[
v^\top \nabla^2 H(q)v
=
\lambda\alpha
\left[
\sum_{i=1}^N p_i(q)\left(\frac{v_i}{\mu_i^\beta}\right)^2
-\left(\sum_{i=1}^N p_i(q)\frac{v_i}{\mu_i^\beta}\right)^2
\right],
\]
which is nonnegative because the bracket is a variance under the probability
vector \(p(q)\). Thus \(H\) is convex.

For coercivity, write \(s_i=q_i/\mu_i^\beta\) and
\(a_i=\mu_i^\gamma e^{-\alpha c/\mu_i^\beta}\). Then
\[
H(q)=\sum_{i=1}^N \mu_i s_i
+
\frac{\lambda}{\alpha}\log\!\left(\sum_{i=1}^N a_i e^{-\alpha s_i}\right).
\]
Using the probability vector \(\rho_i=\mu_i/\Lambda\) and the lower bound
\(\log\sum_i e^{x_i}\ge \sum_i \rho_i x_i\), one obtains
\[
H(q)\ge \left(1-\frac{\lambda}{\Lambda}\right)\sum_{i=1}^N \mu_i s_i + C
\]
for a finite constant \(C\). Since \(\lambda<\Lambda\), the coefficient of the
linear term is positive, so \(H(q)\to+\infty\) as \(\|q\|_2\to\infty\) with
\(q\in C\).\proofqed
\end{proof}

\begin{corollary}
\label{cor:minimizer-equals-equilibrium}
The constrained problem
\[
\min_{q\in C} H(q)
\]
has a unique minimizer, and that minimizer is the equilibrium \(q^*\) from
Theorem~\ref{thm:exact-boundary-equilibrium}.
\end{corollary}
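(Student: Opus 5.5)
Existence follows from Proposition~\ref{prop:convexity-coercivity}: \(H\) is continuous and coercive on the closed set \(C\), so its sublevel sets in \(C\) are compact and a minimizer exists. Because \(H\) is convex and \(C\) is a closed convex cone, a point \(\bar q\in C\) minimizes \(H\) over \(C\) if and only if it satisfies the first-order variational inequality \(\langle\nabla H(\bar q),q-\bar q\rangle\ge 0\) for all \(q\in C\). On the orthant this is equivalent to the KKT complementarity system
\[
\bar q_i\ge 0,\qquad \partial_i H(\bar q)\ge 0,\qquad \bar q_i\,\partial_i H(\bar q)=0,\qquad i=1,\dots,N.
\]

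Next I would invoke Proposition~\ref{prop:gradient-identity}, which gives \(\partial_i H(\bar q)=(\mu_i-\lambda p_i(\bar q))/\mu_i^\beta\). The factor \(\mu_i^\beta>0\) does not affect signs, so the KKT system is exactly
\[
\bar q_i\ge 0,\qquad \lambda p_i(\bar q)\le\mu_i,\qquad \bar q_i\bigl(\mu_i-\lambda p_i(\bar q)\bigr)=0.
\]
These are the equilibrium conditions of the reflected ODE written before Theorem~\ref{thm:exact-boundary-equilibrium}. To check this, note that \(\Gamma_i(\bar q)=0\) for all \(i\) means \(F_i=0\) when \(\bar q_i>0\) and \(\max\{F_i,0\}=0\) when \(\bar q_i=0\). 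Hence the set of minimizers of \(H\) over \(C\) coincides with the set of equilibria of the reflected ODE.

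By Theorem~\ref{thm:exact-boundary-equilibrium}, the reflected ODE has exactly one equilibrium \(q^*\). Therefore the minimizer exists, is unique, and equals \(q^*\). This gives uniqueness without needing strict convexity of \(H\), which could fail: the Hessian variance form vanishes on directions with \(v_i/\mu_i^\beta\) constant in \(i\). The main point requiring care is the equivalence between the variational inequality and the coordinatewise complementarity. This is the standard argument on the orthant: test with \(q=\bar q+te_i\) to get \(\partial_iH\ge0\), and test with \(q=0\) and \(q=2\bar q\) to get \(\langle\nabla H(\bar q),\bar q\rangle=0\). Together with nonnegativity of each term, these yield the coordinatewise complementarity.
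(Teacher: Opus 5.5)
Your proposal is correct and follows the same route as the paper. Existence comes from coercivity. The first-order condition $0\in\nabla H(q)+N_C(q)$ on the orthant is identified with the reflected-ODE complementarity conditions via Proposition~\ref{prop:gradient-identity}. Uniqueness is inherited from Theorem~\ref{thm:exact-boundary-equilibrium} rather than from strict convexity. You simply spell out the variational-inequality-to-complementarity step that the paper compresses into the normal-cone condition; once existence is known, only its necessity direction is needed.
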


\begin{proof}
Convexity and coercivity give existence of a minimizer. The first-order
condition \(0\in \nabla H(q)+N_C(q)\) is exactly the equilibrium
complementarity condition because of Proposition~\ref{prop:gradient-identity}.
Uniqueness then follows from Theorem~\ref{thm:exact-boundary-equilibrium}.\proofqed
\end{proof}

\subsection{Global Convergence}

Along a reflected trajectory \(q(\cdot)\),
\[
\frac{d}{dt}H(q(t))
=
\sum_{i=1}^N \partial_i H(q(t))\,\dot q_i(t)
=
-\sum_{i:\,\dot q_i(t)\neq 0}\mu_i^\beta \bigl(\partial_i H(q(t))\bigr)^2
\le 0
\]
for almost every \(t\). Equality holds if and only if the equilibrium
complementarity conditions are satisfied. Since \(H\) is coercive, trajectories
are bounded; convergence then follows by standard semigroup arguments
\citep{Dunn1981,Brezis1973,Bruck1975,Pazy1978}.

\begin{theorem}
\label{thm:global-convergence}
Fix \(\alpha>0\), \(\beta>0\), \(\gamma\in\mathbb{R}\), \(c\ge 0\), service
rates \(\mu_i>0\), and \(\lambda<\Lambda\). Then the reflected ODE on
\(\mathbb{R}_+^N\) has the unique equilibrium \(q^*\) from
Theorem~\ref{thm:exact-boundary-equilibrium}, and every reflected trajectory satisfies
\[
\lim_{t\to\infty} q(t)=q^*.
\]
\end{theorem}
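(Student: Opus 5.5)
The plan is to read the reflected ODE as a projected gradient flow of the convex potential \(H\) in the metric induced by \(D^{-1}\). Then \(H\) serves as a strict Lyapunov function, and LaSalle's invariance principle pins down the limit. Uniqueness of the equilibrium is already settled by Theorem~\ref{thm:exact-boundary-equilibrium} together with Corollary~\ref{cor:minimizer-equals-equilibrium}, so only convergence remains.

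\emph{Well-posedness.} Since \(D\) is diagonal and \(C\) is a product of half-lines, the coordinatewise rule \eqref{eq:reflected-coordinate-dynamics} is exactly the Skorokhod problem on the orthant with normal reflection, driven by the smooth, globally Lipschitz field \(F\). Note that \(p\) has bounded derivatives, since the Hessian formula of Proposition~\ref{prop:convexity-coercivity} bounds \(\nabla p\) uniformly. The Lipschitz continuity of the Skorokhod map \citep{HarrisonReiman1981} then yields a unique absolutely continuous solution for every initial condition. It also yields continuous dependence on \(q(0)\), which gives a continuous semiflow on \(C\).

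\emph{Descent identity.} For almost every \(t\), \(q\) is differentiable and \(\dot q_i=\Gamma_i(q)\). I would use the chain rule for \(H\circ q\) (valid since \(H\) is \(C^1\) and \(q\) is absolutely continuous) together with Proposition~\ref{prop:gradient-identity}, which gives \(F_i=-\mu_i^\beta\partial_iH\). There are three cases:
\begin{itemize}[label={}]
\item if \(q_i>0\), the \(i\)-th term is \(-\mu_i^\beta(\partial_iH)^2\);
\item if \(q_i=0\) and \(F_i>0\), the term is again \(-\mu_i^\beta(\partial_iH)^2\);
\item if \(q_i=0\) and \(F_i\le0\), the term is \(0\).
\end{itemize}
Hence
\[
\frac{d}{dt}H(q(t))=-\sum_{i}\mu_i^{-\beta}\Gamma_i(q(t))^2\le 0 .
\]
This cleaner form replaces the condition \(\dot q_i\neq0\) in the displayed identity. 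The right side vanishes iff \(\Gamma(q)=0\). By the complementarity characterization preceding Theorem~\ref{thm:exact-boundary-equilibrium}, \(\Gamma(q)=0\) holds iff \(q=q^*\).

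\emph{LaSalle step.} Since \(H\) is nonincreasing along trajectories and coercive (Proposition~\ref{prop:convexity-coercivity}), each trajectory stays in the compact sublevel set \(\{H\le H(q(0))\}\cap C\). Its \(\omega\)-limit set is therefore nonempty, compact and invariant, and \(H\) is constant on it, equal to \(\lim_t H(q(t))\). Take a trajectory starting in the \(\omega\)-limit set. Along it, \(\int_0^T\sum_i\mu_i^{-\beta}\Gamma_i^2\,dt=0\), so \(\Gamma(q(t))=0\) for a.e.\ \(t\). Then \(\dot q=0\) a.e., the trajectory is constant, and it equals \(q^*\). Hence the \(\omega\)-limit set is \(\{q^*\}\) and \(q(t)\to q^*\).

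\emph{Main obstacle.} The delicate step is the invariance argument, because \(\Gamma\) is discontinuous at the boundary faces. The standard LaSalle principle assumes a continuous semiflow and a continuous Lyapunov derivative, and only the former is available here. My first approach is to avoid the pointwise derivative altogether. I would work with the integrated identity \(H(q(0))-H(q(T))=\int_0^T\sum_i\mu_i^{-\beta}\Gamma_i^2\) and with continuity of the semiflow, which comes from the Lipschitz Skorokhod map. This shows that \(H\) is constant along solutions in the \(\omega\)-limit set, and the a.e.\ argument above then applies.

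If that proves awkward, there is a fallback that bypasses LaSalle. Convexity of \(H\) gives
\[
H(q)-H(q^*)\le\langle\nabla H(q),q-q^*\rangle .
\]
Combining this with monotonicity of the projected gradient field yields \(\frac{d}{dt}\tfrac12\|q-q^*\|_{D^{-1}}^2\le -(H(q)-H(q^*))\le0\). Thus the \(D^{-1}\)-distance to \(q^*\) is nonincreasing and \(\int_0^\infty (H(q)-H(q^*))\,dt<\infty\). Since \(H(q(t))\) is monotone, it must converge to \(H(q^*)\). Uniqueness of the minimizer from Corollary~\ref{cor:minimizer-equals-equilibrium} and compactness then force \(q(t)\to q^*\).
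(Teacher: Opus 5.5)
Your proof is correct. Its skeleton matches the paper's: descent of \(H\), coercivity for precompactness, limit points identified through the complementarity set, and uniqueness from Theorem~\ref{thm:exact-boundary-equilibrium}. The key step, however, is carried out differently.

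\emph{The paper's route.} The paper works directly with accumulation points. If a limit point \(\bar q\) were not an equilibrium, the dissipation \(\sum_i\mu_i^{-\beta}\Gamma_i^2\) would stay bounded away from zero near \(\bar q\). Since the speed is bounded, infinitely many visits would then drive \(H(q(t))\) below any bound, which is impossible. This works despite the discontinuity of \(\Gamma\). Any nonzero coordinate of \(\Gamma(\bar q)\) comes either from \(\bar q_i>0\) or from \(\bar q_i=0\) with \(F_i(\bar q)>0\), and in both cases \(\Gamma_i=F_i\) remains bounded away from zero near \(\bar q\). So the paper needs no continuous dependence on initial data.

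\emph{Your LaSalle route.} You instead use well-posedness and continuity of the semiflow, which you correctly obtain from the Lipschitz Skorokhod map and Gronwall, to make the \(\omega\)-limit set invariant. After that you never have to examine the local behavior of \(\Gamma\). The obstacle you flag is milder than you suggest. LaSalle for a continuous semiflow only needs \(H\) continuous and nonincreasing along orbits, not a continuous derivative, and your integrated identity supplies exactly that.

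\emph{Your fallback.} This is the strongest of the three arguments, and it is the semigroup-style argument the paper only alludes to by citing Brezis and Bruck. For a coordinate with \(q_i=0\) and \(F_i\le 0\), the actual contribution \(0\) is at most \(q_i^*\partial_iH(q)\ge 0\), which is the normal-cone inequality. Combined with convexity, this gives Fej\'er monotonicity of \(\|q(t)-q^*\|_{D^{-1}}\). Because \(H(q(t))\) is nonincreasing, it also yields the explicit rate \(H(q(t))-H(q^*)\le \|q(0)-q^*\|_{D^{-1}}^2/(2t)\). Neither the paper's argument nor your LaSalle argument provides this rate.
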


\begin{proof}
The Lyapunov descent identity gives that \(H(q(t))\) is nonincreasing. By
Proposition~\ref{prop:convexity-coercivity}, the trajectory is bounded and
precompact.
Every accumulation point must satisfy the equilibrium complementarity
conditions; otherwise the descent identity would stay strictly negative in a
neighborhood of that point. By Theorem~\ref{thm:exact-boundary-equilibrium},
the only such point is \(q^*\). Hence the full trajectory converges to
\(q^*\).\proofqed
\end{proof}

\begin{remark}
This convergence is deterministic and does not imply positive Harris
recurrence of the CTMC; \Cref{sec:obstruction} identifies why.
\end{remark}

\section{Benchmark Deterministic Validation}
\label{sec:detval}

We check the closed-form equilibrium from
Theorem~\ref{thm:exact-boundary-equilibrium} against the numerical attractor
of the reflected ODE. For each system, we solve \(\lambda=G(K)\),
reconstruct \(q^*\), and compare with the integrator output. The diagnostics
are the sup-norm discrepancy and active-set match.

\begin{table}[htbp]
  \centering
  \caption{Boundary equilibrium verification diagnostics.}
  \label{tab:boundary_equilibrium}
  \begin{tabular}{lrrrr}
    \toprule
    System & $N$ & $\rho$ & $\|q^*_{\text{exact}} - q^*_{\text{ODE}}\|_\infty$ & Active Set \\
    \midrule
    Benchmark & 10 & 0.8000 & 2.22e-14 & 5/10 \\
    Symmetric 4-server & 4 & 0.8000 & 0.00e+00 & 0/4 \\
    Heavy-load 5-server & 5 & 0.9500 & 2.89e-15 & 4/5 \\
    Light-load 3-server & 3 & 0.2000 & 0.00e+00 & 0/3 \\
    UAS special case & 10 & 0.8000 & 9.99e-15 & 6/10 \\
    \bottomrule
  \end{tabular}
\end{table}

At the ten-server benchmark (\Cref{sec:model}), \(K^*=2.9632\times 10^{-4}\),
the active set has \(5\) servers, and the maximum coordinatewise discrepancy
is \(2.22\times 10^{-14}\).

Four additional systems (symmetric, high-load asymmetric, light-load, and
UAS-parameter) all pass with active-set agreement and maximum discrepancy
at or below \(9.99\times 10^{-15}\).

\section{Why the Old Stochastic Lift Fails}
\label{sec:obstruction}

The convergence theorem from \Cref{sec:detcore} does not imply positive Harris
recurrence of the CTMC. The gap appears in two places: the fluid scaling and
the exact generator applied to \(H\).

\subsection{The Scaling Gap}

A classical Dai-style fluid argument would require the same deterministic object
to arise from the usual queue-length scaling
\[
\bar Q^{(r)}(t):=\frac{1}{r}Q^{(r)}(rt).
\]
See \citet{Dai1995,DaiMeyn1995} for the fluid-limit stability argument. For
Reflected UAS,
\[
p_i(q)=
\frac{\mu_i^\gamma \exp\!\left(-\alpha (q_i+c)/\mu_i^\beta\right)}
{\sum_{j=1}^N \mu_j^\gamma \exp\!\left(-\alpha (q_j+c)/\mu_j^\beta\right)}.
\]
If one inserts a fluid-scale state \(rq\), then
\[
p_i(rq)=
\frac{\mu_i^\gamma \exp\!\left(-\alpha (rq_i+c)/\mu_i^\beta\right)}
{\sum_{j=1}^N \mu_j^\gamma \exp\!\left(-\alpha (rq_j+c)/\mu_j^\beta\right)}.
\]
With fixed \(\alpha>0\), the factor \(rq_i\) appears inside the exponent. As
\(r\to\infty\), the routing map hardens toward an argmin-type rule on the
scaled coordinates instead of remaining the smooth softmax map used in
\Cref{sec:detcore}. The reflected ODE is not yet identified with the classical
fluid limit of the CTMC.

\subsection{Exact Generator Calculation}

Let \(Q\in \mathbb Z_+^N\). The CTMC generator is
\[
(\mathcal L f)(Q)
=
\lambda\sum_{i=1}^N p_i(Q)\bigl[f(Q+e_i)-f(Q)\bigr]
+
\sum_{i=1}^N \mu_i \mathbf 1_{\{Q_i>0\}}\bigl[f(Q-e_i)-f(Q)\bigr].
\]
Apply \(\mathcal L\) to the deterministic potential
\[
H(Q)=
\sum_{i=1}^N \mu_i^{1-\beta}Q_i
+
\frac{\lambda}{\alpha}\log W(Q),
\qquad
W(Q)=\sum_{j=1}^N \mu_j^\gamma
\exp\!\left(-\alpha (Q_j+c)/\mu_j^\beta\right).
\]
Define \(a_i:=\alpha/\mu_i^\beta\). Then the exact one-step increments are
\[
H(Q+e_i)-H(Q)
=
\mu_i^{1-\beta}
+
\frac{\lambda}{\alpha}
\log\!\Bigl[1-p_i(Q)\bigl(1-e^{-a_i}\bigr)\Bigr],
\]
and, for \(Q_i>0\),
\[
H(Q-e_i)-H(Q)
=
-\mu_i^{1-\beta}
+
\frac{\lambda}{\alpha}
\log\!\Bigl[1+p_i(Q)\bigl(e^{a_i}-1\bigr)\Bigr].
\]

\begin{proposition}
\label{prop:generator-boundary-mismatch}
The exact generator drift of \(H\) can be written as
\[
(\mathcal L H)(Q)
=
-\sum_{i:\,Q_i>0}\mu_i^\beta \bigl(\partial_i H(Q)\bigr)^2
+
\sum_{i:\,Q_i=0}\lambda p_i(Q)\partial_i H(Q)
+
R(Q),
\]
where \(R(Q)\) is uniformly bounded on \(\mathbb Z_+^N\).
\end{proposition}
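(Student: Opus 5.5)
We will Taylor-expand the exact one-step increments displayed just above and group the terms according to whether $Q_i>0$ or $Q_i=0$. The reference quantity is the first-order part of the generator. By Proposition~\ref{prop:gradient-identity}, $\partial_i H(Q)=(\mu_i-\lambda p_i(Q))/\mu_i^\beta$. Write each increment as $H(Q\pm e_i)-H(Q)=\pm\partial_i H(Q)+r_i^\pm(Q)$, with the convention that $r_i^-$ is used only when $Q_i>0$. Substituting into \eqref{eq:ctmc-generator} gives
\[
(\mathcal L H)(Q)=\sum_{i=1}^N\bigl(\lambda p_i(Q)-\mu_i\mathbf 1_{\{Q_i>0\}}\bigr)\partial_i H(Q)+R(Q),
\qquad
R(Q):=\sum_{i}\lambda p_i(Q)\,r_i^+(Q)+\sum_{i:\,Q_i>0}\mu_i\,r_i^-(Q).
\]
We then split the first sum by the sign of $Q_i$.
\begin{itemize}
\item For $Q_i>0$, the coefficient is $\lambda p_i-\mu_i=-\mu_i^\beta\partial_i H$. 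This produces $-\mu_i^\beta(\partial_i H)^2$.
\item For $Q_i=0$, the service term is absent, so the coefficient is $\lambda p_i$. This produces exactly the boundary term $\lambda p_i(Q)\partial_i H(Q)$ in the statement.
\end{itemize}
This step is pure algebra.

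The remaining work is to show that $R$ is uniformly bounded on $\mathbb Z_+^N$. We will not use a second-order Taylor bound in $Q$, since the lattice step is not small. Instead we exploit the explicit closed forms. Every increment is a function of the single variable $x=p_i(Q)\in(0,1]$:
\[
\phi_i^+(x)=\mu_i^{1-\beta}+\tfrac{\lambda}{\alpha}\log\bigl[1-x(1-e^{-a_i})\bigr],
\qquad
\phi_i^-(x)=-\mu_i^{1-\beta}+\tfrac{\lambda}{\alpha}\log\bigl[1+x(e^{a_i}-1)\bigr].
\]
The first-order terms $\pm\partial_iH=\pm(\mu_i^{1-\beta}-\lambda x/\mu_i^\beta)$ are also affine in $x$. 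Because $1-x(1-e^{-a_i})\ge e^{-a_i}>0$ and $1+x(e^{a_i}-1)\le e^{a_i}$, both logarithms lie in the interval $[-a_i,a_i]$. Hence
\[
|r_i^\pm|\le \mu_i^{1-\beta}+\tfrac{\lambda}{\alpha}a_i+\mu_i^{1-\beta}+\lambda/\mu_i^\beta
\]
uniformly in $Q$. Multiplying by the bounded rates $\lambda p_i\le\lambda$ and $\mu_i$, and summing over the $N$ indices, bounds $|R(Q)|$ by a constant that depends only on the parameters.

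The main obstacle is interpretive rather than technical. Boundedness of $R$ is easy because $p_i\in(0,1]$ and $a_i$ is fixed, but $R$ is not small: at fixed $\alpha$ the curvature of $\log W$ on the unit lattice is of order one. We will therefore state clearly that the decomposition isolates the boundary contribution only up to an $O(1)$ remainder. That is all the statement claims.

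If one wants a sharper form, we would also note the following. By concavity of the logarithm, $\lambda p_i r_i^+\le 0$, since $\log(1-xu)\le -xu$ with $u=1-e^{-a_i}\le a_i$ gives $r_i^+\le -(\lambda/\alpha)x(a_i-u)$. The corresponding $r_i^-$ terms are nonnegative. This sign information can be mentioned, but it is not needed for the uniform bound.
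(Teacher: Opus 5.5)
Your decomposition is the paper's. You write each increment as $\pm\partial_iH(Q)+r_i^\pm(Q)$, substitute into the generator, and split the first-order sum according to whether $Q_i>0$. Only the remainder bound differs.

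The paper uses the Hessian formula of Proposition~\ref{prop:convexity-coercivity}, $\partial_{ii}^2H=\lambda\alpha\,p_i(1-p_i)/\mu_i^{2\beta}\le\lambda\alpha/\mu_i^{2\beta}$. With a second-order (Lagrange) Taylor remainder along the unit segment this gives $|r_i^\pm|\le\lambda\alpha/(2\mu_i^{2\beta})=\lambda a_i^2/(2\alpha)$.

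Your reason for avoiding this route is mistaken. The Lagrange remainder needs a bounded second derivative on the segment, not a small step, and the variance formula supplies that bound on the whole orthant.

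Your closed-form route is still valid, and it gains something. Using the exact log-ratio increments gives a bound linear in $a_i=\alpha/\mu_i^\beta$. Your triangle inequality gives $2\mu_i^{1-\beta}+2\lambda/\mu_i^\beta$, and a little more care yields $0\le r_i^\pm\le\lambda/\mu_i^\beta=(\lambda/\alpha)a_i$. The Taylor bound is quadratic in $a_i$, so yours is a real improvement at large inverse temperature such as the benchmark $\alpha=20$. The Taylor route is shorter and does not depend on the softmax form of the increments.

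Your closing sign remark is wrong for $r_i^+$. With $u=1-e^{-a_i}$ you have $r_i^+=(\lambda/\alpha)\bigl[\log(1-xu)+a_ix\bigr]$. Then $\log(1-xu)\le -xu$ gives $r_i^+\le(\lambda/\alpha)\,x(a_i-u)$, with a plus sign. That upper bound is nonnegative and says nothing about the sign; you dropped a minus somewhere.

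In fact $x\mapsto\log(1-xu)+a_ix$ is concave and vanishes at $x=0$ and at $x=1$, since $\log(e^{-a_i})+a_i=0$. Hence $r_i^+\ge 0$. The same argument gives $r_i^-\ge 0$, which you stated correctly. Both also follow at once from convexity of $H$ (Proposition~\ref{prop:convexity-coercivity}), because a convex function lies above its tangent along the coordinate segments. So $R(Q)\ge 0$, and the remainder can only add to the drift.

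This does not affect your proof of the proposition, since you did not use the remark. But as written it is a false statement and should be corrected or dropped.
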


\begin{proof}
From Proposition~\ref{prop:gradient-identity},
\[
\partial_i H(Q)=\frac{\mu_i-\lambda p_i(Q)}{\mu_i^\beta}.
\]
By the Hessian formula from Proposition~\ref{prop:convexity-coercivity},
\(\partial_{ii}^2 H(q)\le \lambda\alpha/\mu_i^{2\beta}\) for every
\(q\in\mathbb R_+^N\), so the diagonal second derivatives of \(H\) are
uniformly bounded on the orthant. Apply first-order Taylor expansion along each
coordinate:
\[
H(Q+e_i)-H(Q)=\partial_i H(Q)+r_i^+(Q),
\]
\[
H(Q-e_i)-H(Q)=-\partial_i H(Q)+r_i^-(Q)
\qquad (Q_i>0).
\]
Hence \(r_i^\pm(Q)\) are uniformly bounded. Substituting these expansions into
the generator gives
\[
(\mathcal L H)(Q)
=
\sum_{i=1}^N \bigl(\lambda p_i(Q)-\mu_i\mathbf 1_{\{Q_i>0\}}\bigr)\partial_i H(Q)
+
R(Q),
\]
with uniformly bounded \(R(Q)\). For \(Q_i>0\),
\[
\lambda p_i(Q)-\mu_i=-\mu_i^\beta \partial_i H(Q),
\]
which yields the negative square term. For \(Q_i=0\), the service jump is
absent, so only the arrival contribution
\(\lambda p_i(Q)\partial_i H(Q)\) remains.\proofqed
\end{proof}

\subsection{The Boundary Obstruction}

Proposition~\ref{prop:generator-boundary-mismatch} differs from the
reflected-ODE Lyapunov identity at the boundary. In the reflected ODE, a
boundary coordinate with \(\partial_i H(q)\ge 0\) is clipped and contributes
\(0\) to \(\dot H\).
In the CTMC, the same boundary coordinate contributes
\(\lambda p_i(Q)\partial_i H(Q)\), which is strictly positive whenever
\(Q_i=0\), \(p_i(Q)>0\), and \(\partial_i H(Q)>0\).

This boundary contribution is why the old lift fails. The obstruction is
specific: it blocks direct application of \(H\) to the generator, not all
Lyapunov approaches.

\begin{table}[htbp]
  \centering
  \caption{CTMC generator boundary mismatch diagnostics.}
  \label{tab:boundary_mismatch}
  \resizebox{\textwidth}{!}{%
  \begin{tabular}{lrrrrl}
    \toprule
    System & $N$ & Boundary States & Positive $\partial$-term & Max Gap & Obstruction \\
    \midrule
    Mismatch 2-server symmetric & 2 & 31 & 1 & 1.5113 & Yes \\
    Mismatch 2-server asymmetric & 2 & 31 & 0 & 3.6809 & No \\
    Mismatch 3-server & 3 & 361 & 0 & 8.8257 & No \\
    \bottomrule
  \end{tabular}
  }
\end{table}

In the symmetric two-server system, one of \(31\) boundary states has a
strictly positive boundary contribution (maximum \(3.2\times 10^{-1}\)). The
other toy systems show no positive boundary term in this audit, so the
obstruction appears as a specific witness rather than a universal pattern.

\begin{figure}[!htbp]
\centering
\includegraphics[width=0.86\textwidth]{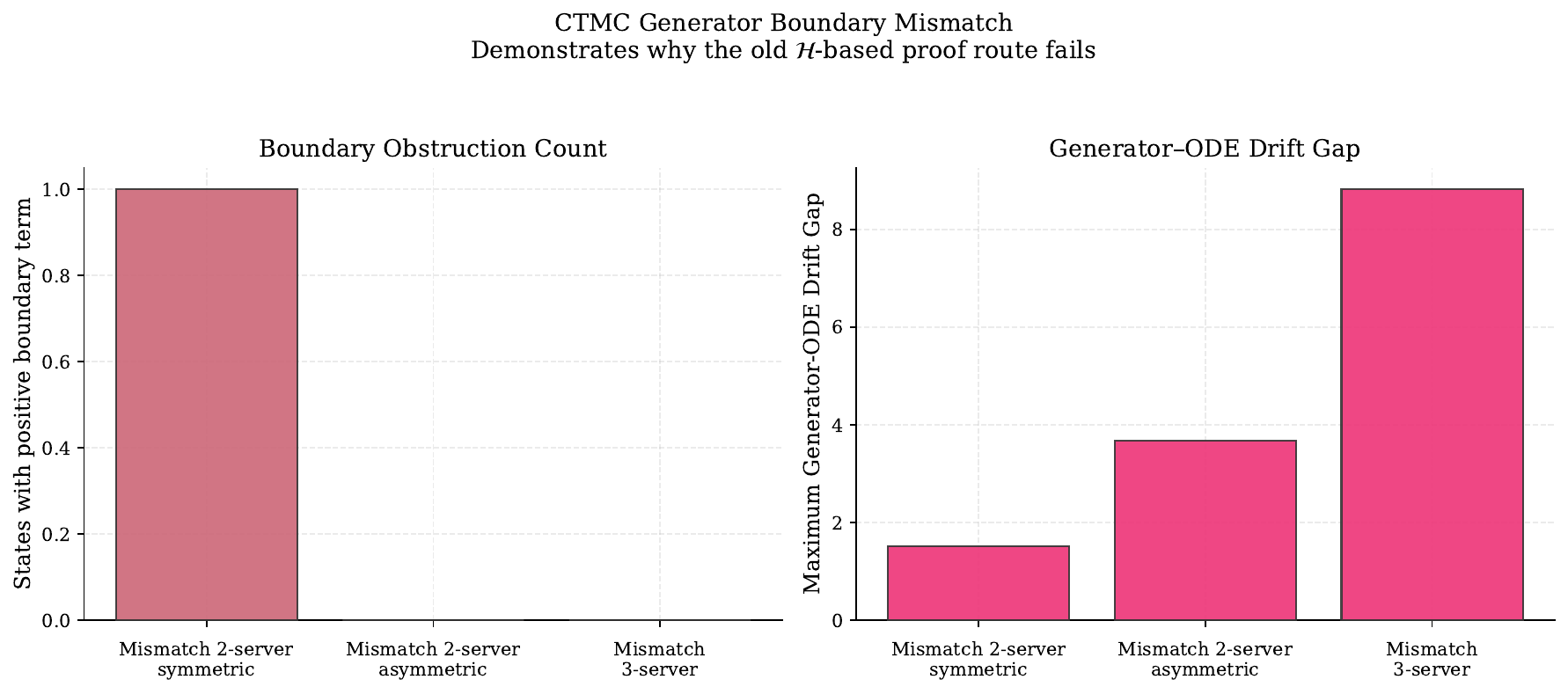}
\caption{Toy-system obstruction summary for the old \(H\)-based lift. The left
panel records whether the audit exhibits a positive boundary contribution. The
right panel records the size of the generator--ODE drift gap.}
\label{fig:h3_mismatch_demo}
\end{figure}

\begin{remark}
Any stochastic stability proof must either bound the boundary contribution in
Proposition~\ref{prop:generator-boundary-mismatch} or use a Lyapunov function
other than \(H\).
\end{remark}

For the CTMC, a Lyapunov function must handle both one-step generator jumps
and boundary states where service completions are absent. The weighted
quadratic in \Cref{sec:directctmc} does this: its generator expansion needs
only a bound on the softmax contribution to the minimum scaled queue length.

\section{Direct CTMC Drift Calculation}
\label{sec:directctmc}

We construct a Foster--Lyapunov drift inequality for the CTMC using a
weighted-quadratic Lyapunov function and a softmax minimum bound
\citep{MeynTweedie1993}.

\subsection{Weighted Quadratic Drift}

Consider the norm-like function
\[
V(Q):=\frac12\sum_{i=1}^N \frac{Q_i^2}{\mu_i^\beta},
\qquad Q\in\mathbb Z_+^N.
\]
For an arrival to coordinate \(i\),
\[
V(Q+e_i)-V(Q)=\frac{Q_i+1/2}{\mu_i^\beta},
\]
and for a service completion at coordinate \(i\) with \(Q_i>0\),
\[
V(Q-e_i)-V(Q)=-\frac{Q_i-1/2}{\mu_i^\beta}.
\]
Therefore the generator satisfies
\[
(\mathcal L V)(Q)
=
\lambda\sum_{i=1}^N p_i(Q)\frac{Q_i}{\mu_i^\beta}
-
\sum_{i=1}^N \mu_i^{1-\beta}Q_i
+
R_0(Q),
\]
where
\[
R_0(Q)
:=
\frac{\lambda}{2}\sum_{i=1}^N p_i(Q)\mu_i^{-\beta}
+
\frac12\sum_{i=1}^N \mu_i^{1-\beta}\mathbf 1_{\{Q_i>0\}}
\]
is uniformly bounded.

Rewrite the routing law as an ordinary softmax over shifted
energies:
\[
s_i(Q):=\frac{Q_i}{\mu_i^\beta}+\kappa_i,
\qquad
\kappa_i:=\frac{c}{\mu_i^\beta}-\frac{\gamma}{\alpha}\log\mu_i.
\]
Then
\[
p_i(Q)=\frac{e^{-\alpha s_i(Q)}}{\sum_{j=1}^N e^{-\alpha s_j(Q)}}.
\]
The entropy variational identity gives
\[
\sum_{i=1}^N p_i(Q)\frac{Q_i}{\mu_i^\beta}\le m(Q)+C_1,
\qquad
m(Q):=\min_{1\le i\le N}\frac{Q_i}{\mu_i^\beta},
\]
for a finite constant \(C_1\) depending only on
\((\alpha,\beta,\gamma,c,\mu)\) \citep{WainwrightJordan2008}. Substituting this
bound into the generator and decomposing
\[
Q_i=\mu_i^\beta m(Q)+\Delta_i(Q),
\qquad
\Delta_i(Q)\ge 0,
\]
yields
\[
(\mathcal L V)(Q)
\le
-(\Lambda-\lambda)m(Q)
-
\sum_{i=1}^N \mu_i^{1-\beta}\Delta_i(Q)
+
R,
\]
with \(R<\infty\). Since
\[
|Q|_1=\left(\sum_{i=1}^N \mu_i^\beta\right)m(Q)+\sum_{i=1}^N \Delta_i(Q),
\]
one may choose
\[
\varepsilon
:=
\min\!\left\{
\frac{\Lambda-\lambda}{\sum_{i=1}^N \mu_i^\beta},
\min_{1\le i\le N}\mu_i^{1-\beta}
\right\}
>0
\]
and obtain the linear drift inequality
\[
(\mathcal L V)(Q)\le -\varepsilon |Q|_1 + R.
\]

\begin{proposition}
\label{prop:conditional-direct-ctmc}
Fix \(\alpha>0\), \(\beta>0\), \(\gamma\in\mathbb R\), \(c\ge 0\), service
rates \(\mu_i>0\), and \(\lambda<\Lambda:=\sum_{i=1}^N \mu_i\). Under the
weighted-quadratic calculation above, the Reflected-UAS queue-length CTMC
admits a Foster-Lyapunov drift inequality of the form
\[
(\mathcal L V)(Q)\le -\varepsilon |Q|_1 + R
\]
for some \(\varepsilon>0\) and \(R<\infty\).
\end{proposition}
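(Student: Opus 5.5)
The proof follows the weighted-quadratic calculation preceding the statement; what remains is to justify each inequality carefully.

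\emph{Exact generator of \(V\).} Apply \(\mathcal L\) to \(V(Q)=\tfrac12\sum_i Q_i^2/\mu_i^\beta\) using the one-step increments \((Q_i+1/2)/\mu_i^\beta\) and \(-(Q_i-1/2)/\mu_i^\beta\). Because the service term carries the indicator \(\mathbf 1_{\{Q_i>0\}}\), we have \(\mathbf 1_{\{Q_i>0\}}Q_i=Q_i\) on \(\mathbb Z_+^N\). So the service contribution is exactly \(-\sum_i\mu_i^{1-\beta}Q_i\) plus a bounded term. This is the key point: at the boundary the absent service completion multiplies \(Q_i=0\), so no analogue of the boundary term in Proposition~\ref{prop:generator-boundary-mismatch} appears. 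The remainder satisfies
\[
R_0\le \tfrac{\lambda}{2}\max_i\mu_i^{-\beta}+\tfrac12\sum_i\mu_i^{1-\beta},
\]
since \(\sum_i p_i=1\).

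\emph{Softmax minimum bound.} The Gibbs variational identity gives
\[
\sum_i p_i s_i=\min_{\nu\in\Delta}\Bigl\{\sum_i\nu_i s_i-\tfrac1\alpha \mathcal H(\nu)\Bigr\}+\tfrac1\alpha \mathcal H(p)\le \min_i s_i+\tfrac{\log N}{\alpha}.
\]
An elementary alternative is to write \(\sum_i p_i s_i=-\tfrac1\alpha\sum_i p_i\log p_i-\tfrac1\alpha\log\sum_j e^{-\alpha s_j}\) and use \(-\tfrac1\alpha\log\sum_j e^{-\alpha s_j}\le\min_j s_j\). Since \(s_i=Q_i/\mu_i^\beta+\kappa_i\), it follows that
\[
\sum_i p_i Q_i/\mu_i^\beta\le m(Q)+C_1,\qquad C_1=\tfrac{\log N}{\alpha}+\max_i\kappa_i-\min_i\kappa_i.
\]
This is the step I expect to need the most care, because the offsets \(\kappa_i\) must be absorbed uniformly. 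The bound \(\min_i s_i\le m(Q)+\max_i\kappa_i\) handles them, so \(C_1\) depends only on \((\alpha,\beta,\gamma,c,\mu)\).

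\emph{Assembling the linear drift.} Substituting the bound gives
\[
\mathcal LV\le\lambda m(Q)-\sum_i\mu_i^{1-\beta}Q_i+\lambda C_1+\sup R_0.
\]
Decompose \(Q_i=\mu_i^\beta m(Q)+\Delta_i\) with \(\Delta_i\ge0\), which holds by the definition of \(m\). Then
\[
\sum_i\mu_i^{1-\beta}Q_i=\Lambda m(Q)+\sum_i\mu_i^{1-\beta}\Delta_i,
\]
so \(\mathcal LV\le-(\Lambda-\lambda)m-\sum_i\mu_i^{1-\beta}\Delta_i+R\). Using \(|Q|_1=(\sum_i\mu_i^\beta)m+\sum_i\Delta_i\), each of the two nonpositive terms dominates \(\varepsilon\) times the corresponding piece of \(|Q|_1\), with \(\varepsilon\) as defined before the statement. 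This yields \(\mathcal LV\le-\varepsilon|Q|_1+R\). Here \(\varepsilon>0\) because \(\lambda<\Lambda\) and each \(\mu_i>0\).
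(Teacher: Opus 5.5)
Your proposal is correct and follows the paper's own weighted-quadratic calculation step for step: the exact generator of $V$, the entropy (Gibbs) bound on the softmax average, the decomposition $Q_i=\mu_i^\beta m(Q)+\Delta_i(Q)$, and the same choice of $\varepsilon$. Your explicit constant $C_1=\tfrac{\log N}{\alpha}+\max_i\kappa_i-\min_i\kappa_i$ makes the softmax minimum bound uniform on all of $\mathbb{Z}_+^N$, which settles the uniformity that the paper's Discussion section still flags as open; the drift inequality therefore holds on the whole lattice, not only on the sampled states.
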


\begin{remark}
Proposition~\ref{prop:conditional-direct-ctmc} bypasses the obstruction from
\Cref{sec:obstruction} by using a different Lyapunov function and drift
decomposition.
\end{remark}

\subsection{Validation Checks}

Each step is validated on the toy grids and benchmark state bank: the
generator identity, shifted-energy representation, softmax bound, and drift
inequality.

At the benchmark parameter point,
\[
\varepsilon=0.2128,
\qquad
R=24.34,
\]
with no drift violations in the sampled state bank (worst residual
\(-17.887\)). Five auxiliary parameter points all have positive
\(\varepsilon\) and no violations; a sweep of \(294\) points, including the
benchmark default, satisfies the same checks.

\begin{table}[htbp]
  \centering
  \caption{Exhaustive small-grid CTMC drift diagnostics.}
  \label{tab:drift_audit}
  \begin{tabular}{lrrrrrr}
    \toprule
    System & $N$ & $\rho$ & States & Violations & $\varepsilon$ & $R$ \\
    \midrule
    Toy 2-server symmetric & 2 & 0.8000 & 496 & 0 & 0.2000 & 1.86 \\
    Toy 2-server asymmetric & 2 & 0.8000 & 496 & 0 & 0.2141 & 2.91 \\
    Toy 3-server & 3 & 0.8000 & 5456 & 0 & 0.2136 & 4.45 \\
    Toy 2-server UAS & 2 & 0.8000 & 496 & 0 & 0.2000 & 1.91 \\
    \bottomrule
  \end{tabular}
\end{table}

The exhaustive check (\Cref{tab:drift_audit}) evaluates every state in the toy
grids. All four systems pass with zero violations and positive
\(\varepsilon\). These checks verify the implementation, not the theorem.

\begin{figure}[!htbp]
\centering
\includegraphics[width=0.82\textwidth]{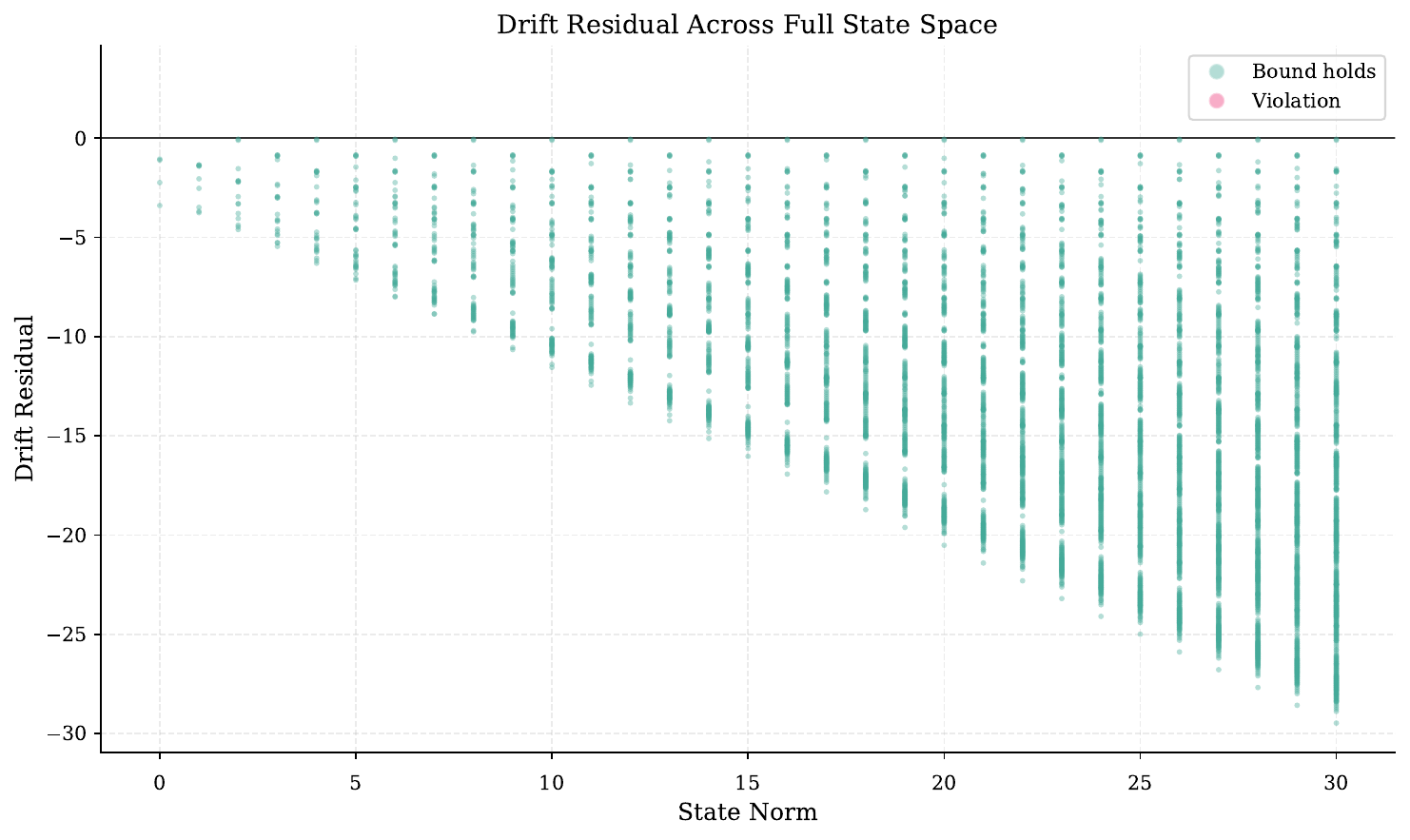}
\caption{Statewise residuals for the exhaustive toy-grid drift check. Every
point lies below zero, so the drift inequality holds on the selected
finite-state toy grids.}
\label{fig:h4_drift_grid}
\end{figure}

\section{CTMC Validation and Benchmark Comparison}
\label{sec:benchmark}

The benchmark comparison evaluates Reflected UAS against two baselines: UAS
(the unmodified rule) and JSSQ, a shortest-expected-delay dispatch rule for
heterogeneous servers
\citep{Winston1977,Houck1987,GuptaHarcholBalterSigmanWhitt2007,SelenAdanKapodistriaVanLeeuwaarden2016}.
The drift calculation from \Cref{sec:directctmc} applies at this parameter
point.

\begin{table}[htbp]
  \centering
  \caption{Benchmark empirical performance averaged across seed blocks.}
  \label{tab:benchmark_empirical}
  \begin{tabular}{lrrr}
    \toprule
    Policy & $\mathbb{E}[|Q|_1]$ & Gini Index & Mean Sojourn \\
    \midrule
    Reflected UAS (default) & 10.04 & 0.1740 & 0.896 \\
    JSSQ & 11.00 & 0.3222 & 0.982 \\
    UAS & 11.47 & 0.3286 & 1.024 \\
    \bottomrule
  \end{tabular}
\end{table}

\Cref{tab:benchmark_empirical} summarizes the independent-seed rerun over seed
blocks \(1000\), \(2000\), and \(3000\). Reflected UAS has the lowest mean
steady-state total queue length, Gini index, and mean sojourn time. Averaged
across blocks, mean total queue length is \(10.04\) for Reflected UAS,
\(11.00\) for JSSQ, and \(11.47\) for UAS.

The pairwise differences are consistent across all three seed blocks: Reflected
UAS is lower than JSSQ by about \(0.96\) units and lower than UAS by about
\(1.43\) units.

\begin{figure}[!htbp]
\centering
\includegraphics[width=0.82\textwidth]{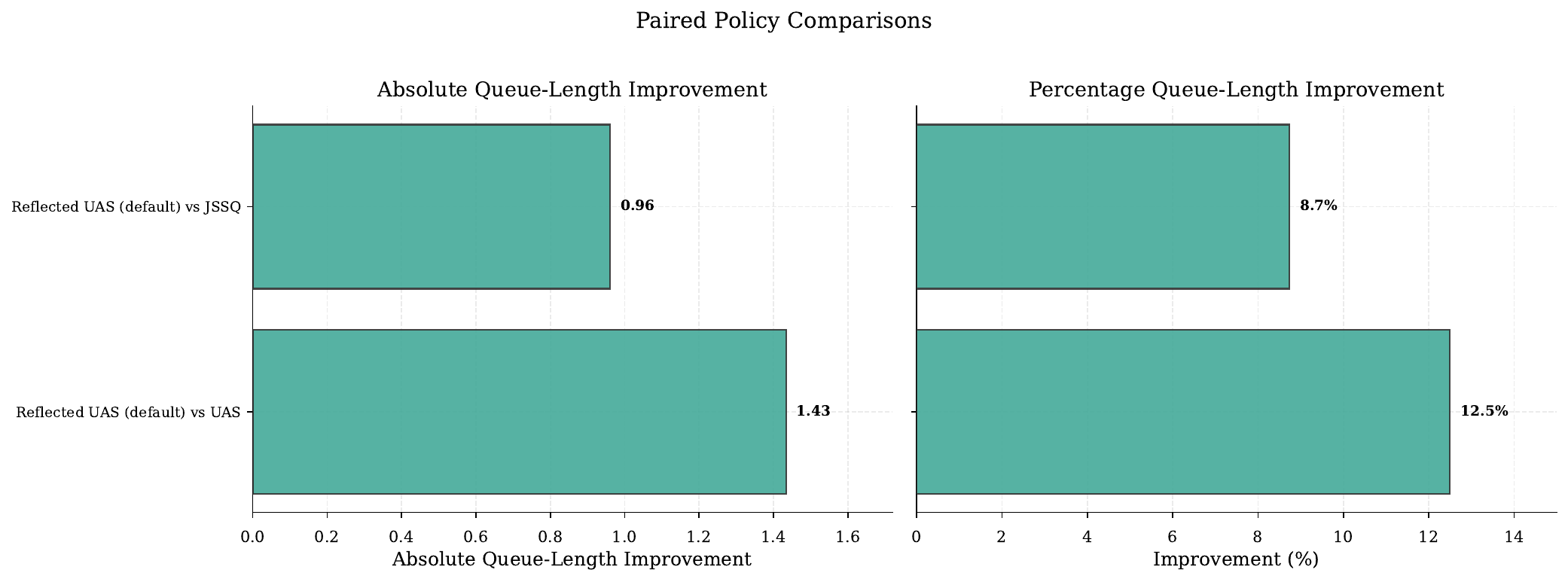}
\caption{Reported benchmark comparison of Reflected UAS against JSSQ and UAS.
The reflected benchmark policy has lower mean total queue length than both
baselines in the reported seed blocks.}
\label{fig:h5_benchmark_comparisons}
\end{figure}

Because Reflected UAS is softmax-based, it defines a differentiable routing
map compatible with gradient-based policy optimization, unlike hard dispatch
rules such as JSQ (join-the-shortest-queue) and JSSQ
\citep{Williams1992,SuttonMcAllesterSinghMansour1999}.

\section{Discussion and Scope}
\label{sec:discussion}

The reflected ODE under subcritical load has a unique boundary equilibrium and
a convex Lyapunov function that drives all trajectories to it. This
deterministic theory is complete. The stochastic picture is not: the
deterministic potential \(H\) does not transfer to the CTMC generator because
of the boundary term in \Cref{sec:obstruction}, and the weighted-quadratic
drift inequality from \Cref{sec:directctmc} is verified only on sampled
states, not the full lattice.

Closing the stochastic gap requires either a proof that the softmax minimum
bound holds uniformly, or a different Lyapunov construction that avoids the
boundary obstruction. A fluid-limit identification would also suffice, but the
fixed-\(\alpha\) softmax hardens under fluid scaling
(\Cref{sec:obstruction}), so a direct Dai-style argument does not apply.

The benchmark comparison is empirical and limited to one parameter point with
three seed blocks. It does not establish dominance of Reflected UAS over UAS
or JSSQ in general.

\appendix
\section{Supplementary Derivations and Numerical Evidence}
\label{sec:appendix}
\makeatletter
\setlength{\@fptop}{0pt plus 1pt}
\setlength{\@fpbot}{0pt plus 3fil}
\setcounter{topnumber}{3}
\setcounter{bottomnumber}{2}
\setcounter{totalnumber}{4}
\renewcommand{\topfraction}{0.95}
\renewcommand{\bottomfraction}{0.85}
\renewcommand{\floatpagefraction}{0.80}
\makeatother

The appendix gives the derivations and numerical checks behind the main text.
The mathematical statements used by the paper are contained in the manuscript;
the code and data release reproduce the tables and figures.

\newcommand{\apptablecaption}[2]{%
  \refstepcounter{table}%
  \noindent\textbf{Table \thetable.} #1\label{#2}\par\medskip
}

\newcommand{\appfigurecaption}[2]{%
  \refstepcounter{figure}%
  \noindent\textbf{Figure \thefigure.} #1\label{#2}\par\smallskip
}

\subsection{Deterministic Equilibrium and Convergence Checks}

The deterministic calculations compare the closed-form equilibrium in
\Cref{thm:exact-boundary-equilibrium} with independently computed
reflected-ODE trajectories. The diagnostics report the terminal diameter of
multi-start trajectories, the terminal vector-field residual, the
supremum-norm discrepancy between the closed-form and numerical equilibria, and
the active set size.

\begin{table}[htbp]
  \centering
  \caption{Reflected-ODE multi-start convergence diagnostics.}
  \label{tab:reflected_ode_convergence}
  \begin{tabular}{lrrrr}
    \toprule
    System & $N$ & Trajectories & Diameter & $\|\dot{q}(T)\|_\infty$ \\
    \midrule
    Benchmark & 10 & 34 & 1.22e-15 & 2.66e-15 \\
    Symmetric 4-server & 4 & 28 & 0.00e+00 & 0.00e+00 \\
    \bottomrule
  \end{tabular}
\end{table}

\begin{center}
\includegraphics[width=0.82\textwidth]{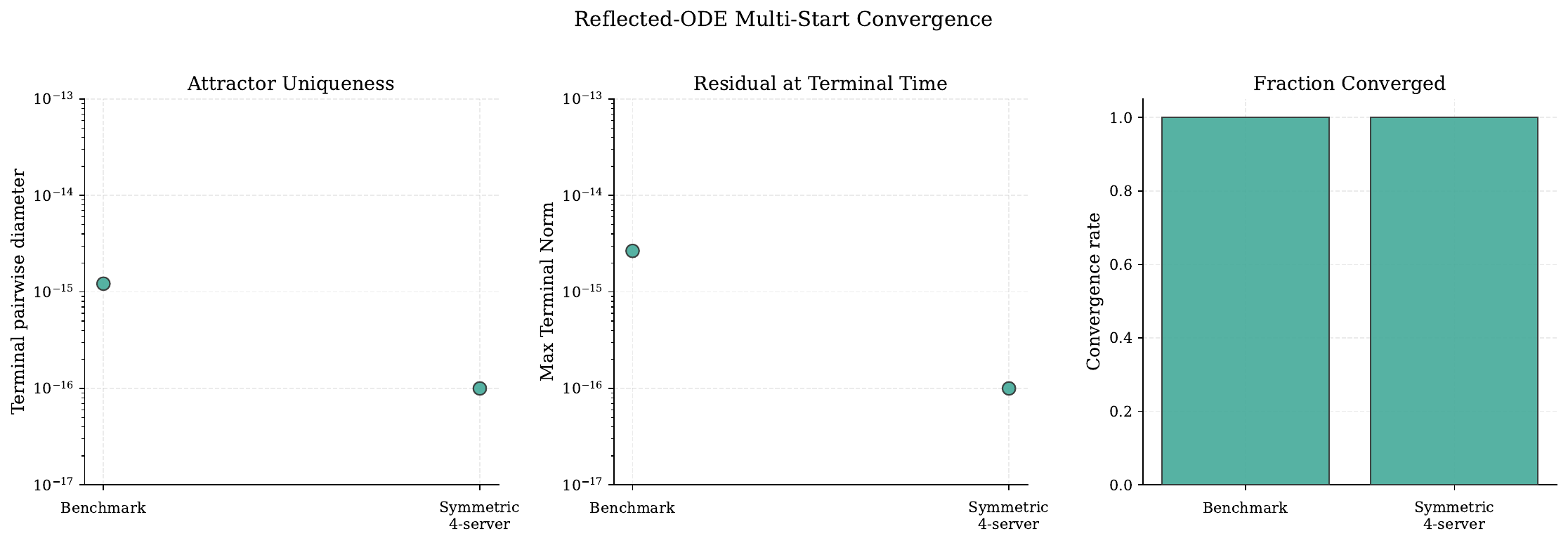}
\end{center}
\appfigurecaption{Multi-start convergence diagnostics for the reflected ODE. The panels
report terminal trajectory diameter, terminal residual norm, and the fraction
of trajectories that reach the common attractor.}{fig:appendix_h1h2_summary}

\begin{center}
\includegraphics[width=0.82\textwidth]{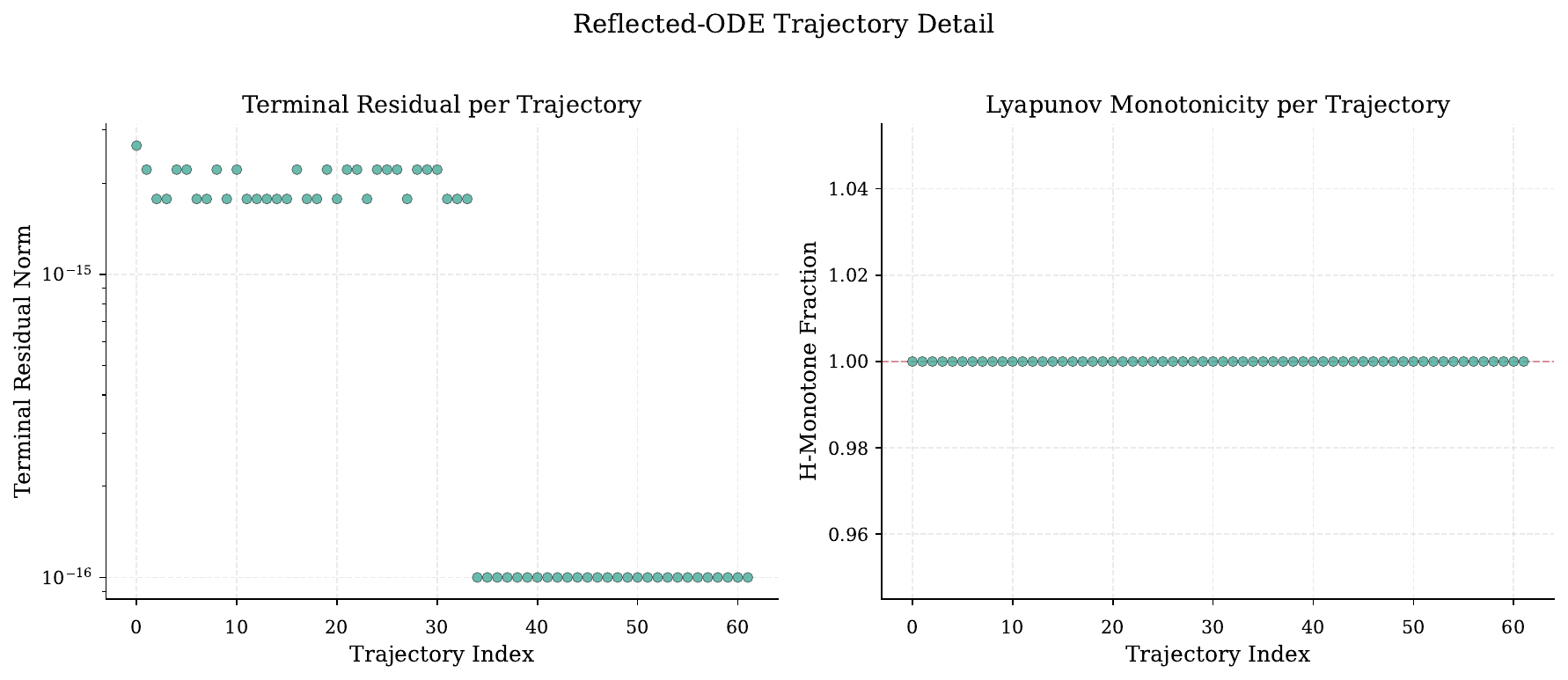}%
\vspace{-4pt}
\end{center}
\appfigurecaption{Trajectory-level reflected-ODE diagnostics. Each point corresponds to
one trajectory and reports the terminal residual norm together with the
observed Lyapunov monotonicity along the numerical path.}{fig:appendix_h1h2_trajectories}

\begin{center}
\includegraphics[width=0.72\textwidth]{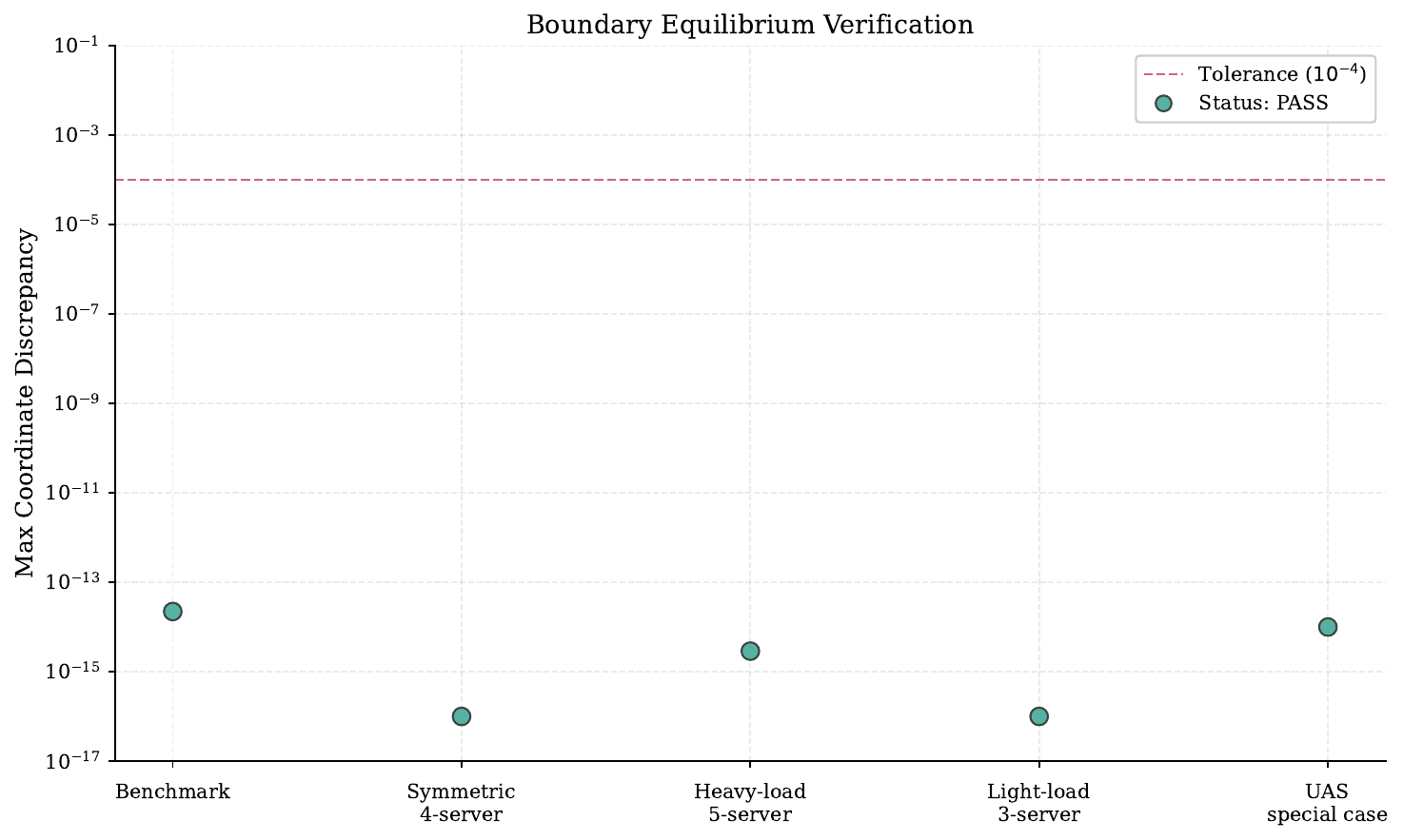}%
\vspace{-4pt}
\end{center}
\appfigurecaption{Supremum-norm discrepancy between the closed-form boundary
equilibrium and the numerical reflected-ODE attractor. The tolerance line is
shown for scale.}{fig:appendix_h2_boundary_verification}

\subsection{Boundary Obstruction Diagnostics}

The obstruction diagnostics evaluate the exact CTMC generator drift of the
deterministic potential \(H\) at boundary states. The state-level decomposition
separates the interior negative-square contribution from the boundary term in
\Cref{prop:generator-boundary-mismatch}, identifying the term that is absent
from the reflected-ODE descent identity.

\begin{center}
\includegraphics[width=0.58\textwidth]{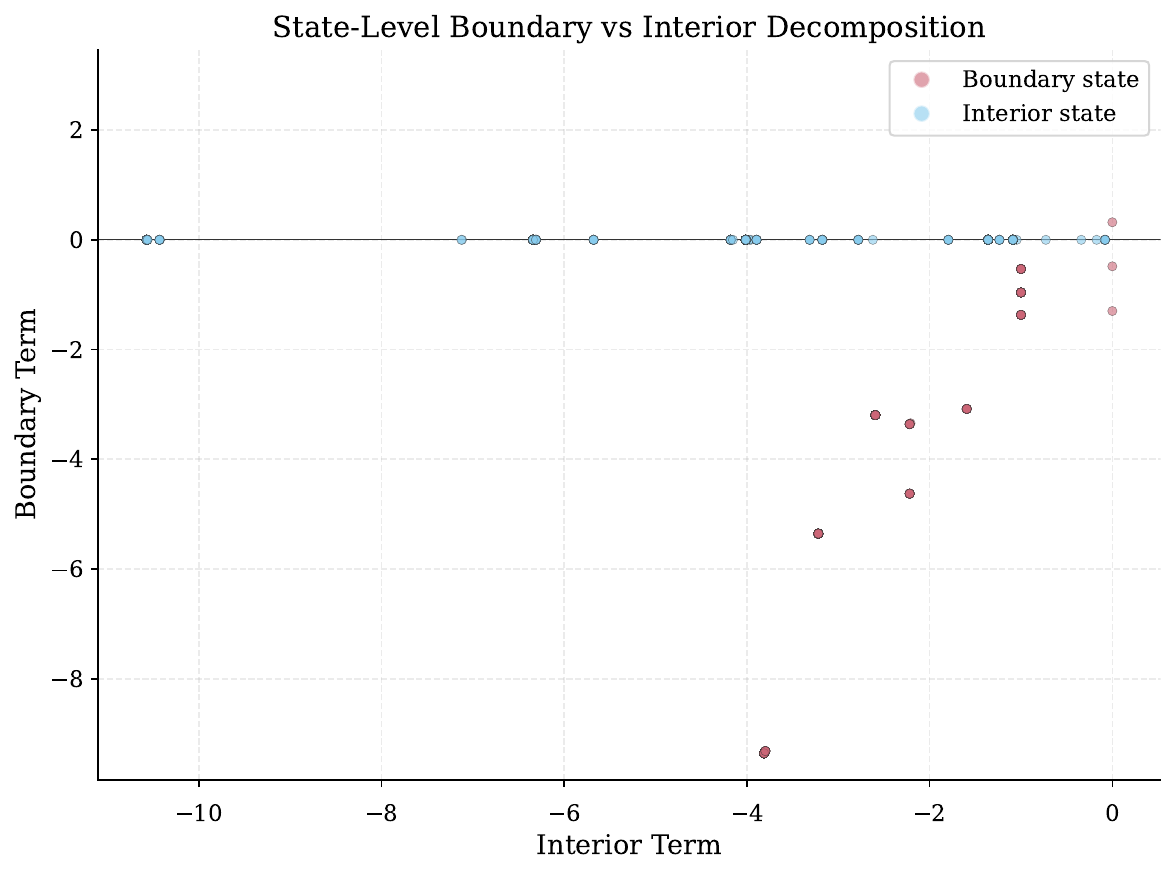}%
\vspace{-4pt}
\end{center}
\appfigurecaption{State-level decomposition of the generator mismatch in the two-server
obstruction example. The axes show the interior contribution and the boundary
contribution to the drift of \(H\).}{fig:appendix_h3_states}

\subsection{Finite-State CTMC Drift Checks}

The finite-state CTMC calculations evaluate the weighted-quadratic drift
inequality from \Cref{sec:directctmc} on selected toy grids and benchmark state
sets. They check the generator identity, the softmax minimum bound, and the
reported constants on the displayed finite collections of states.

\begin{center}
\includegraphics[width=0.78\textwidth]{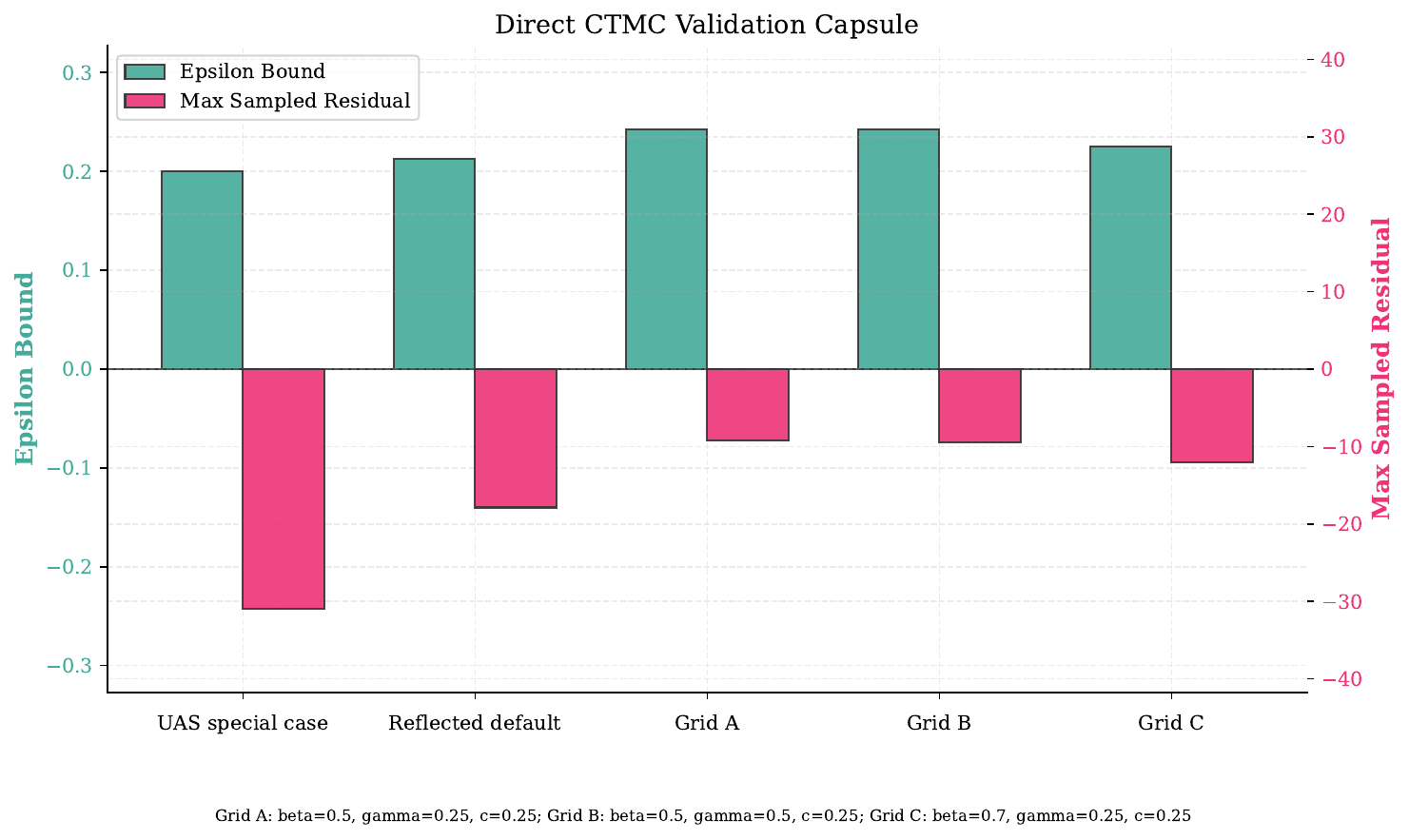}%
\vspace{-4pt}
\end{center}
\appfigurecaption{Direct CTMC drift diagnostics for the candidate parameter family. The
figure reports the computed drift constants and residuals for the displayed
candidate points.}{fig:appendix_h4_direct_audit}

\begin{center}
\includegraphics[width=0.65\textwidth]{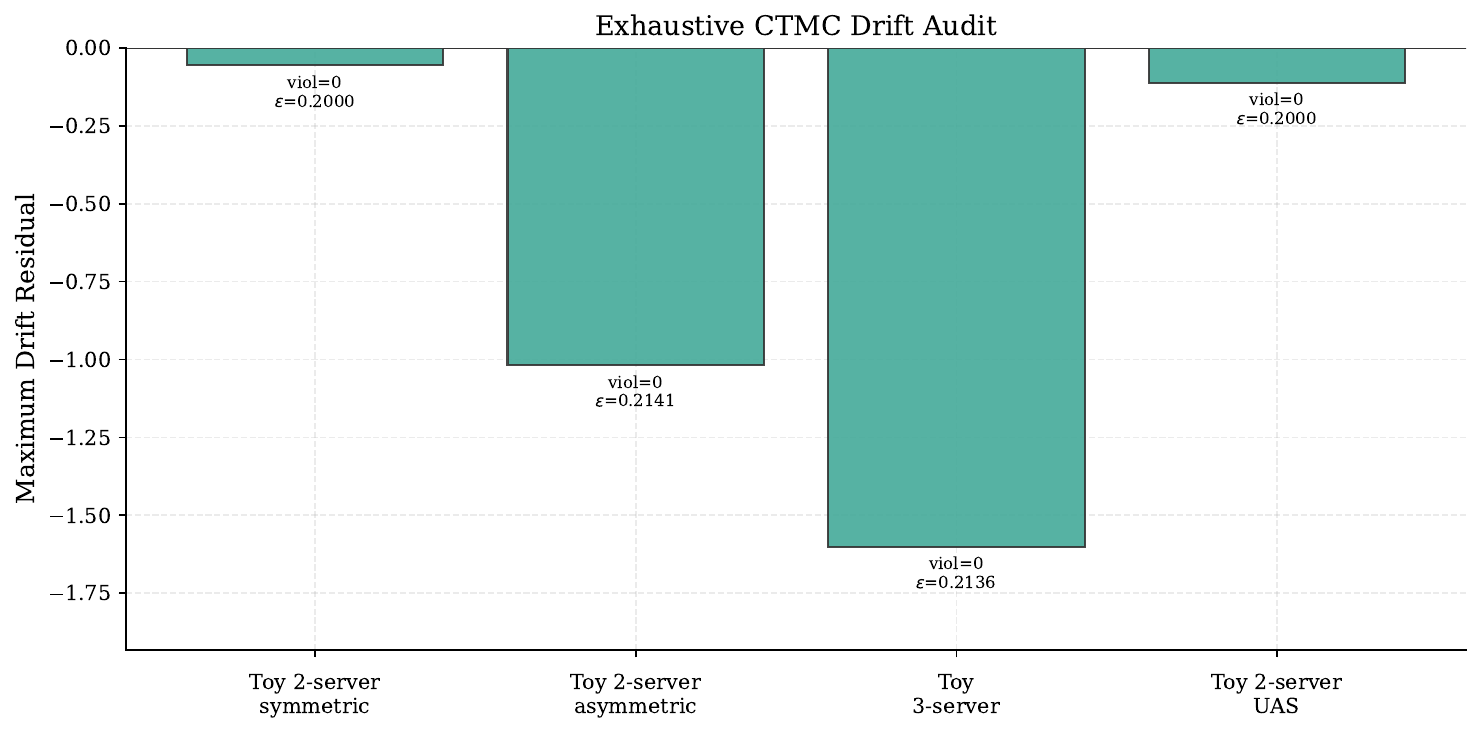}%
\vspace{-4pt}
\end{center}
\appfigurecaption{Exhaustive toy-grid drift check. For each toy system, the figure
reports the number of examined states, the number of drift-inequality
violations, and the computed constants \(\varepsilon\) and \(R\).}{fig:appendix_h4_exhaustive_summary}

\vspace{-6pt}
\subsection{Broader Applicability: Differentiable Policy Learning}

One advantage of Reflected UAS is its smooth, softmax-based routing map.
Unlike hard dispatch rules such as JSQ or JSSQ, the Reflected UAS formulation
is differentiable. The routing policy can be used in gradient-based learning
workflows, including neural-network-driven reinforcement learning
\citep{Williams1992,SuttonMcAllesterSinghMansour1999}.

The main text analyzes the fixed-parameter queueing model. The diagnostics
below show how the smooth routing map can serve as a differentiable policy
class for learning-oriented control of heterogeneous queues.

\begin{center}
\includegraphics[width=0.95\textwidth,height=0.42\textheight,keepaspectratio]{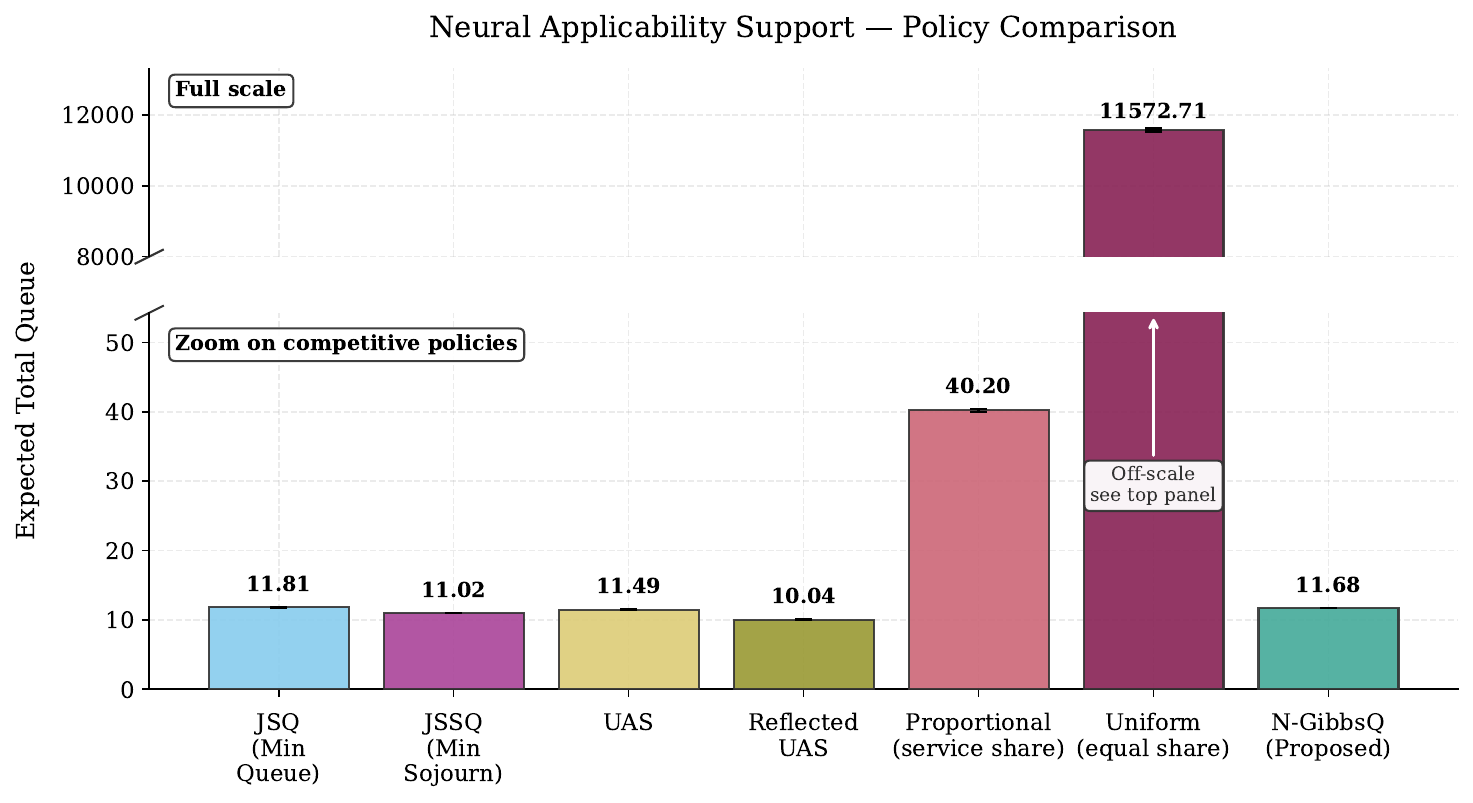}%
\vspace{-4pt}
\end{center}
\appfigurecaption{Reflected UAS in a differentiable-policy learning
environment. The smooth routing map allows gradient-based optimization while
preserving the queue-aware structure of the policy.}{fig:appendix_h7_support}

\begin{center}
\includegraphics[width=0.82\textwidth]{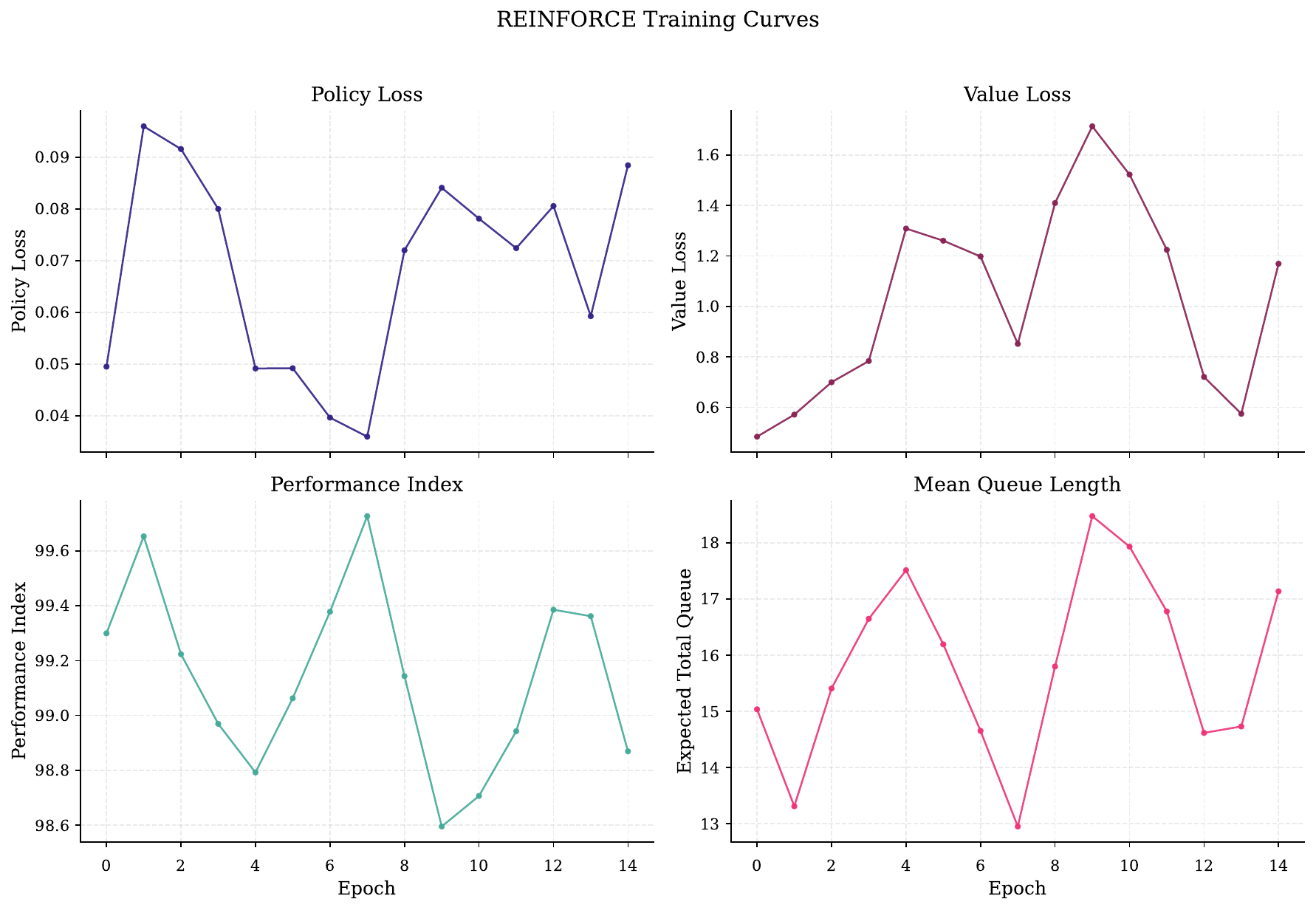}%
\vspace{-4pt}
\end{center}
\appfigurecaption{Training diagnostics for gradient-based policy optimization. The
continuous Reflected UAS policy parametrization yields stable learning
trajectories in the displayed reinforcement-learning run.}{fig:appendix_h7_training}

\subsection{Reproducibility Note}

The numerical results in the paper are reproducible from the project
repository:
\url{https://github.com/neryva/gibbsq}.
The repository contains the parameter files, simulation drivers, manuscript
sources, and generated tables and figures used to regenerate the deterministic
equilibrium checks, reflected-ODE convergence diagnostics, generator-mismatch
calculations, finite-state CTMC drift checks, benchmark comparisons, and the
appendix figures.

\clearpage
\bibliographystyle{plainnat}
\bibliography{bib/references}

\end{document}